\newcommand{\tabincell}[2]{\begin{tabular}{@{}#1@{}}#2\end{tabular}}
\begin{document}
%
\title{Task-Oriented Delay-Aware Multi-Tier Computing in Cell-free Massive MIMO Systems}

\author{Kunlun Wang, \IEEEmembership{Member,~IEEE}, Dusit Niyato, \IEEEmembership{Fellow,~IEEE}, Wen Chen, \IEEEmembership{Senior~Member,~IEEE}, and Arumugam Nallanathan \IEEEmembership{Fellow,~IEEE}

\thanks{This work was supported by National key project 2020YFB1807700. (Corresponding author: Kunlun Wang.)} 

\thanks{K. Wang is with the Shanghai Key Laboratory of Multidimensional Information Processing, East China Normal University, Shanghai 200241, China, and also with the School of Communication and Electronic Engineering, East China Normal University, Shanghai 200241, China (e-mail: klwang@cee.ecnu.edu.cn).

D. Niyato is with School of Computer Science and Engineering, Nanyang Technological University, Singapore (e-mail: dniyato@ntu.edu.sg).

W. Chen is with the Department of Electronic Engineering, Shanghai Jiao Tong University, Shanghai 200240, China (e-mail: wenchen@sjtu.edu.cn).

A. Nallanathan is with the School of Electronic Engineering and Computer Science at Queen Mary University of London, UK. (email: a.nallanathan@qmul.ac.uk).}
}
\maketitle





\begin{abstract}
Multi-tier computing can enhance the task computation by multi-tier computing nodes. In this paper, we propose a cell-free massive multiple-input multiple-output (MIMO) aided computing system by deploying multi-tier computing nodes to improve the computation performance. At first, we investigate the computational latency and the total energy consumption for task computation, regarded as total cost. 
Then, we formulate a total cost minimization problem to design the bandwidth allocation and task allocation, while considering realistic heterogenous delay requirements of the computational tasks. Due to the binary task allocation variable, the formulated optimization problem is non-convex. Therefore, we solve the bandwidth allocation and task allocation problem by decoupling the original optimization problem into bandwidth allocation and task allocation subproblems. As the bandwidth allocation problem is a convex optimization problem, we first determine the bandwidth allocation for given task allocation strategy, followed by conceiving the traditional convex optimization strategy to obtain the bandwidth allocation solution. Based on the asymptotic property of received signal-to-interference-plus-noise ratio (SINR) under the cell-free massive MIMO setting and bandwidth allocation solution, we formulate a dual problem to solve the task allocation subproblem by relaxing the binary constraint with Lagrange partial relaxation for heterogenous task delay requirements. At last, simulation results are provided to demonstrate that our proposed task offloading
scheme performs better than the benchmark schemes, where the minimum-cost optimal offloading strategy for heterogeneous delay requirements of the computational tasks may be controlled by the asymptotic property of the received SINR in our proposed cell-free massive MIMO-aided multi-tier computing systems.

\end{abstract}
\begin{IEEEkeywords}
Multi-tier computing systems, cell-free massive MIMO systems, energy and delay tradeoff, delay requirements, task offloading
\end{IEEEkeywords}
\IEEEpeerreviewmaketitle
\section{Introduction}
\IEEEPARstart{S}{ince} there are exponential growth of mobile devices in the networks, wireless traffic is growing tremendously recently. It is expected that and the increasing amount of traffic will continue to grow steadily in the coming years. With the proliferation of wireless traffic, delay and energy consumption have emerged as key design metrics for wireless communication systems~\cite{2015-Wkl-packet,2018-She-URLLC}. Additionally, more and more smart devices are connected to the wireless network with the development of intelligent Internet of Things~(IoT), it is estimated that in excess of $24.6$ billion connected devices by $2025$~\cite{2017-Shikhar-IOT}. Meanwhile, the significant growth of novel intelligent applications with intensive tasks (e.g., AR/VR) typically require ultra-reliable and low-latency communications (URLLC) and demand efficient power management for realtime task processing and high energy efficiency (EE)~\cite{2019-She-URLLC,2022-Dang-URLLC}. However, mobile hand-held devices have limited computation, energy and storage resources, as well as limited battery capacity due to their compact form-factor.
These defects pose critical challenges for the realtime intelligent applications. By enabling flexible computation, storage and communication resource coordination, multi-tier computing is a novel and efficient computing architecture, which can schedule intensive tasks to multi-tier computing servers at heterogeneous base stations (BSs) in the edge/fog or cloud of wireless communication systems \cite{2020-Wang-NOMAFOG,2018-Yang-MEETS}. 

From the perspective of EE, the energy consumption model adopted in most of related works is over-simplified, which only models the transmission energy or task computational energy \cite{2016-Chen-Offloading,2018-Yang-MEETS}. However, a more general energy model is not negligible in a multi-tier computing system considering various computation and communication energy consumptions at the multi-tier nodes.
From the perspective of task computation delay, the computation delay model proposed in most of related works fails to capture the effects of heterogeneous delay requirements.
There are diversified applications in multi-tier computing systems, some tasks are delay tolerant, while some tasks are delay sensitive. However, most of these works fail to consider \textcolor{black}{joint influence} of the heterogeneous delay requirements and the total energy consumption, which can be also regarded one of the key task computation metrics in next generation wireless networks.

\textcolor{black}{Regarded as one of the key technologies for next generation wireless communication systems, massive multiple-input multiple-output (MIMO) is capable of significantly improving the task offloading rate so as to improve the task compuation efficiency \cite{2020-Wang-MMIMO,2022-Wang-MMIMOCache,2022-Wang-MMIMOIRS}. In order to make full use of the benefits of massive MIMO in a multi-tier computing system, the cell-free massive MIMO based multi-tier computing is considered. Cell-free massive MIMO is the network-centric massive MIMO, which is distributed across the network \cite{2017-Trung-CMIMO,2019-Manijeh-CFMMIMO}. Thus, a cell-free massive MIMO system supports multiple number of antennas distributed over a large number of access points (APs) in a network, where each AP can serve a small number of devices. These APs are coordinated by the central processing unit (CPU) through a high-transmission data rate and ultra-reliable backhaul link.}
For implementing multi-tier computing, we assume that each AP and the CPU are respectively equipped with an independent computing server, while the computing server for CPU has much larger computation capacity compared to that of AP. Each AP serves all devices in its coverage area. Thus, each device is capable of utilizing the abundant computational resources on the CPU or one of its connected APs (i.e., multi-tier computing framework).
In this context, our proposed computing framework focuses on multi-tier computing nodes (CNs) in cell-free massive MIMO systems, e.g., fog access nodes (FANs), cloud access nodes (CANs) and CPU, realizing collaborative task computing. These CNs are able to help to execute the computational tasks offloaded from task nodes (TNs) according to their computing capabilities, task delay requirements and total cost.

\subsection{Related Work}
In order to address the inefficient task computation issues, multi-tier computing is capable of offloading the intensive tasks to multi-tier nearby CNs with the powerful computing capability realizing remote task execution \cite{2022-Wang-MC}. Therefore, task offloading has received more and more research attention in different edge/fog computing scenarios~\cite{2018-Yang-MEETS,2019-Wang-TVT,2020-Wang-NOMAFOG,2020-Wang-MMIMO,2020-Bai-MECIRS,2022-Dang-MEC,2022-Wang-MMIMOCache,2022-Wang-MMIMOIRS}.
In particular, Wang \MakeLowercase{\textit{et al.}} \cite{2020-Wang-MMIMO} proposed a massive MIMO-aided task offloading system, where multiple TNs can schedule their tasks to nearby computing nodes (CNs) by a massive MIMO-based FAN. As an extension of this work, Wang \MakeLowercase{\textit{et al.}} \cite{2022-Wang-MMIMOCache} proposed a relay assisted multi-tier computing system equipped with massive antennas to improve the task computation performance, they investigated the design of the task allocation, service caching and power allocation jointly to minimize the total task computation latency.
Additionally, Wang \MakeLowercase{\textit{et al.}} \cite{2020-Wang-NOMAFOG} investigated a non-orthogonal multiple access (NOMA)-assisted task offloading system for industrial Internet of things (IIoT), where TNs offload their intensive tasks to nearby CNs via NOMA technique for task computation. Also Wang \MakeLowercase{\textit{et al.}} \cite{2022-Wang-MMIMOIRS} proposed a task offloading framework in a intelligent reflecting surface (IRS) and massive MIMO relay assisted multi-tier computing system, where multiple TNs can offload their intensive tasks to nearby massive MIMO relay node~(MRN) and FAN via the IRS technique for task execution in CNs.
Liu \MakeLowercase{\textit{et al.}} \cite{2019-Yang-Pomt,2020-Liu-Post} proposed a mapping framework for task offloading by mapping multiple tasks or TNs into multiple HNs, where they studied a generalized Nash equilibrium problem to minimize task's offloading delay in a distributed manner.

By equipping with a very large number of distributed APs in a network, cell-free massive MIMO is capable of significantly improving network-throughput as well as EE \cite{2017-Nayebi-CFMMIMO}. Given the benefits of cell-free massive MIMO, the integration of multi-tier computing and cell-free massive MIMO can improve the task offloading performance in a multi-tier computing system~\cite{2017-Nayebi-CFMMIMO,2020-Sudarshan-cellMIMO,2021-Ke-CFMMIMO}. There are some works on
resource management of task offloading in a cell-free massive MIMO system in recent years.
Mukherjee \MakeLowercase{\textit{et al.}} \cite{2020-Sudarshan-cellMIMO} proposed an edge computing-assisted cell-free massive MIMO architecture, where the edge servers and cloud are located at each AP and the central server of this system, respectively, and the authors analysis the task offloading performance by devising suitable communication resource allocation and task allocation strategies. Ke \MakeLowercase{\textit{et al.}} \cite{2021-Ke-CFMMIMO} introduced a grant-free massive access IoT system, where multiple cooperative APs serve massive devices in the network via cell-free massive MIMO technique, they studied two computation strategies at the CNs for massive devices access, i.e., cloud task computation and edge task computation. Wang \MakeLowercase{\textit{et al.}} \cite{2022-Wang-MMIMOCache} investigated the joint strategy of task offloading, computational task caching and power allocation in edge computing systems to minimize the total task offloading latency.

Although the above works have revealed the benefits of cell-free massive MIMO-assisted edge computing \cite{2020-Wang-MMIMO,2022-Wang-MMIMOCache,2022-Wang-MMIMOIRS}, the energy-delay tradeoff in resource management and multi-tier task computation have not been considered.
By considering task computation energy and latency costs in a multi-tier task offloading framework, scheduling tasks to multi-tier CNs can reduce the congestion of task computation as well as reducing the computation energy consumption of each user. To exploit the benefits of energy-delay tradeoff in edge computing framework, there are some works being invested into online dynamic tasks allocation with energy harvesting \cite{2018-Zhang-ERTRA}, task offloading of mobile devices formulated as a constrained multi-objective optimization problem on minimizing both the task computation energy consumption and task computation latency \cite{2021-Arash-MOC}, and jointly task offloading and resource management optimization \cite{2022-Li-EEMEC}, as well as minimizing task response time and packet losses to improve the realtime performance and reliability of task processing \cite{2019-Deng-POG}. \textcolor{black}{However, the influence of heterogeneous delay requirements of computational tasks has not been studied, which represents the different delay requirements of diverse novel applications. Additionally, the influence of heterogeneous delay requirements of the tasks for task allocation in multi-tier computing nodes has not been studied either. Furthermore, all the existing works consider the uncoordinated distributed edge computing scenario.} Thanks to the rapid development of cell-free massive MIMO \cite{2015-Ngo-CFMIMO}, the task offloading via cell-free massive MIMO will be increasingly adopted in multi-tier computing framework.

\begin{table*}[!t]\small
\caption{Novelty Comparison}\label{tablec}
\centering
\begin{tabular}{|l|c|c|c|c|c|c|c|c|c|} 
\hline
 & \tabincell{c}{\cite{2016-Chen-Offloading}-\\2016} & \tabincell{c}{\cite{2019-Wang-TVT}-\\2019} & \tabincell{c}{\cite{2017-Mao-stochastic}-\\2017} & \tabincell{c}{\cite{2019-Ozgun-FogMIMO}-\\2019} & \tabincell{c}{\cite{2020-Wang-NOMAFOG}-\\2020} & \tabincell{c}{\cite{2020-Wang-MMIMO}-\\2020} & \tabincell{c}{\cite{2022-Wang-MMIMOCache}-\\2022} & \tabincell{c}{\cite{2022-Wang-MMIMOIRS}-\\2022} &  \tabincell{c}{Our \\ work} \\
\hline
\tabincell{c}{Joint task allocation and communication\\resource allocation} & $\checkmark$ & $\checkmark$ &  $\checkmark$ &  & $\checkmark$ & $\checkmark$ & $\checkmark$ & $\checkmark$ & $\checkmark$\\
\hline
Massive MIMO &   &  &  & $\checkmark$ &  & $\checkmark$ & $\checkmark$ & $\checkmark$ & $\checkmark$\\
\hline
\tabincell{c}{Minimizing task computation time}  & $\checkmark$  &  &   &  &  & & & & $\checkmark$\\
\hline
\tabincell{c}{Multi-tier computing} &   &  &   &  & $\checkmark$  & $\checkmark$  &$\checkmark$  & $\checkmark$  & $\checkmark$\\
\hline
\tabincell{c}{Cell-free massive MIMO} &   &  &   &  &  &  & & & $\checkmark$\\
\hline
\tabincell{c}{Energy-delay tradeoff} &   &  &   &  &  &  & & & $\checkmark$\\
\hline
\tabincell{c}{Heterogeneous delay requirements} &   &  &   &  &  &  & & & $\checkmark$\\
\hline
\end{tabular}
\label{table1} 
\end{table*}

\subsection{Main Contributions}

Although the above contributions have revealed the benefits of task offloading in a cell-free massive MIMO framework, the multi-tier collaborative task computation minimizing the total energy consumption and latency in cell-free massive MIMO frameworks has not been considered to the best of our knowledge. The influence of delay and energy weight has not been well studied either, which can be regarded as the heterogeneous delay requirements of the tasks and will be a key performance metric for multi-tier computing systems~\cite{2016-Wang-delay,2018-Zhao-FEMOS,2018-Lyu-EEdelay,2020-Wang-NOMAFOG}, 
and the heterogeneous delay requirements is particularly important for battery-limited mobile devices running diverse novel applications. The proposed cell-free massive MIMO-aided multi-tier computing framework includes heterogeneous CNs, e.g., mobile devices, fog/cloud access points, and cloud. The total energy consumption consist of both the task transmission energy and task computation energy, and the total latency cost includes both the task transmission delay and task computation delay. The task computation energy includes the energy consumed by both the local devices and the remote CN.
Furthermore, to characterize the effects of different delay requirements for the tasks in multi-tier computing systems, we employ the delay and energy weight for determining the task allocation policy. That is, delay sensitive and tolerant tasks are classified into different categories by the weights. Based on the requirements of delay sensitive and tolerant tasks, we aim for studying how cell-free massive MIMO systems tackle the challenge of task allocation. \textcolor{black}{In the expressions for the task computation latency and energy consumption, we obtain that the bandwidth allocation variable and task offloading variable are separated. Then, the task allocation and the bandwidth allocation constraints are separable. Therefore, we decouple the original task offloading Problem into two sub-problems: bandwidth allocation optimization and task allocation optimization.}
Specifically, we first exploit the optimal bandwidth allocation strategy to minimize the total cost of our proposed cell-free massive MIMO-assisted multi-tier computing systems. Secondly, we characterize the relationship between cell-free massive MIMO and task allocation strategy based on the asymptotic property of the signal-to-interference-plus-noise ratio (SINR) with very large number of APs. Finally, we optimize the task allocation for heterogeneous delay requirements of the tasks.
We have explicitly compared our unique contributions with the state-of-the-art, which are shown in Table \ref{tablec} and further summarized as follows:
\begin{itemize}
\item \textcolor{black}{In terms of multi-tier task offloading system model, a novel cell-free massive MIMO assisted multi-tier computing framework is proposed for task offloading, which consists of a group of a large number of distributed APs with sharable computing resources to execute the tasks. Furthermore, a problem minimizing the total energy consumption and latency cost is formulation, which offers a new approach to task offloading in a multi-tier computing system.} 
\item \textcolor{black}{In terms of task offloading optimization and resource management analysis, an adaptive bandwidth allocation strategy is proposed for minimizing the total energy and delay cost, where the most efficient bandwidth for each task offloading link can be determined according to the dynamic channels. As far as we know, considering the bandwidth allocation in a cell-free massive MIMO-assisted task offloading system has not been studied recently.}
\item Furthermore, we optimize the task allocation strategy based on the bandwidth allocation result. Given the delay sensitive and tolerant tasks, we study how cell-free massive MIMO systems tackle the challenge of task allocation with heterogeneous delay requirements. Then, we employ the delay and energy weight to determine the optimal task allocation policy.
\item  Finally, the performances of the proposed bandwidth allocation and task offloading strategy have been evaluated through simulations relying on diverse system parameters. The simulation results demonstrate that our task offloading strategy achieves significantly performance improvement in total cost compared to the benchmark schemes subject to realistic communication and computation constraints.
\end{itemize}

\subsection{Paper Organization}
Our paper is organized as follows. The system model is introduced in Section \uppercase\expandafter{\romannumeral2}, while the problem formulation and analysis are presented in Section \uppercase\expandafter{\romannumeral3}. In Section \uppercase\expandafter{\romannumeral4}, we optimize the bandwidth and task allocations in terms of heterogeneous delay requirements by the asymptotic property of the massive MIMO to minimize the total energy and delay cost in cell-free massive MIMO-assisted multi-tier computing systems.
our simulation results \textcolor{black}{are shown in Section} \uppercase\expandafter{\romannumeral5}. Finally, conclusions are provided in Section \uppercase\expandafter{\romannumeral6}. Table \ref{table2} lists the notations.


\begin{table*}[!t]
\caption{Notations}\label{table2}
\centering
\textcolor{black}{\begin{tabular}{lc|lc} 
\hline
Definition & Notation & Definition & Notation  \\
\hline
Bandwidth & $B$ & Task size of task $k$ & $l_k$ \\
Bandwidth allocation variable for TN $k$ & $\eta_{k}$ & Task offloading decision variable of task node $k$ & $\alpha_k$ \\
Transmit power of TN $k$ & $p_b$ & Variance of AWGN & $\sigma_k^2$ \\
Transmit symbol of task node $k$ & $s_k$ & Received SINR of symbol $x_k$ & $\gamma_{k}$\\
Task offloading time of first hop & $t_k^{\rm{TN}}$ & Offloading energy of first hop & $E_k^{\rm{TN}}$\\
Task offloading time of second hop & $t_k$ & Offloading energy of first hop & $E_k^{\rm{FAN}}$\\
Computational latency at CN & $t^F_{k,\rm{comp}}$ & Computational latency at CPU & $t^C_{k,\rm{comp}}$\\
Computational energy consumption at CN & $E_{\rm{re},k}$ & Total energy consumption of TN $k$ & $E_{\rm{total},k}$\\
Total task offloading latency of task node $k$ & $T_{\rm{total},k}$ & Transmitted symbol from the $k$th FAN & $x_{k}$ \\
Weights of delay and energy consumption & $\mu_{k}$ & Weighted sum cost of task node $k$ & $\Omega_{k}$\\
Total cost & $\Omega_{\rm{total}}$ & Number of CAN & $M$ \\
\hline
\end{tabular}}
\label{table1} 
\end{table*}

\begin{figure}[!t]
\begin{center}
\includegraphics [width=3.7in]{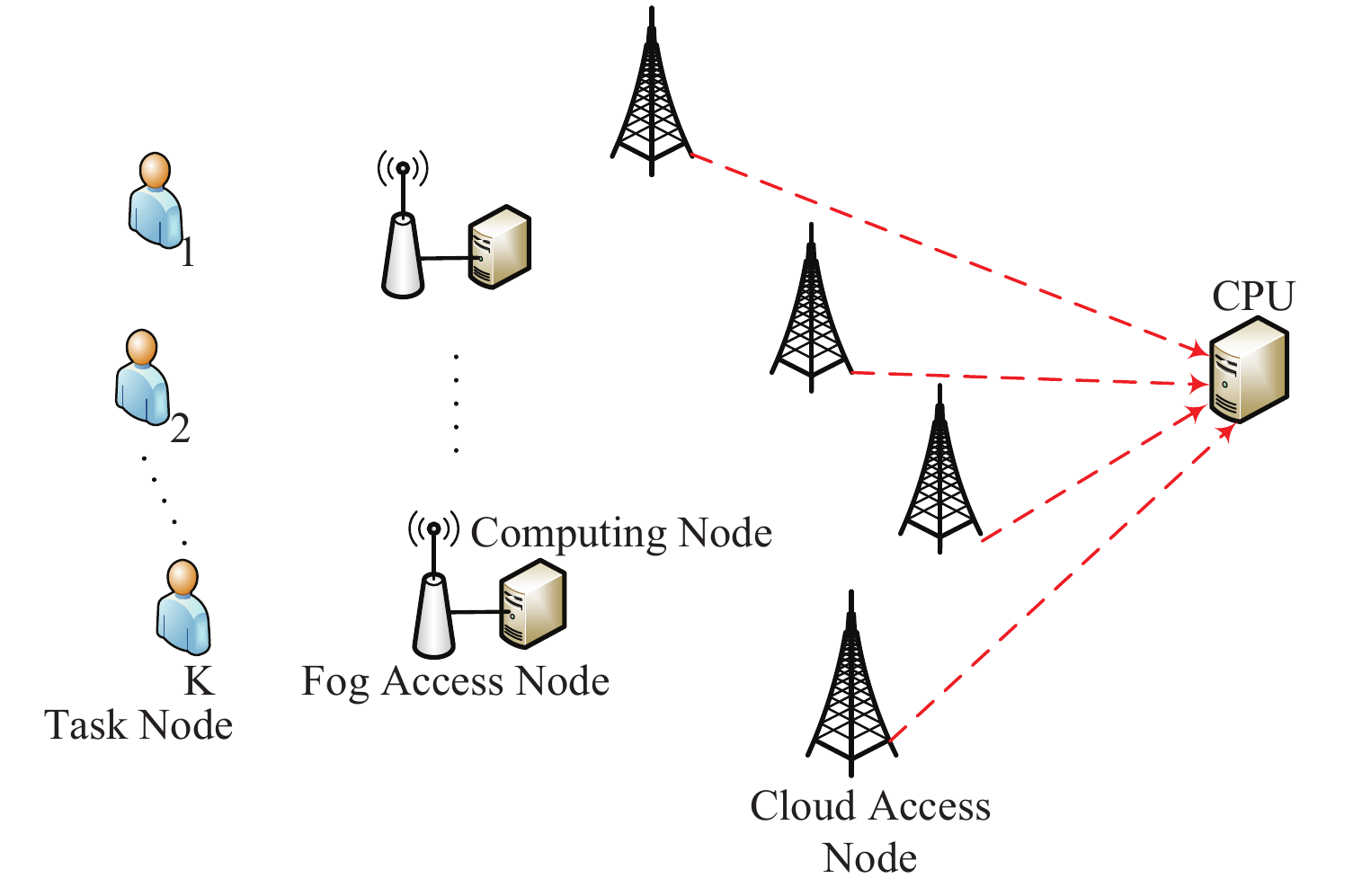}
\caption{Illustration of a cell-free massive MIMO-assisted multi-tier computing system consisting of $K$ TNs, $K$ fog access nodes (FANs) and $M$ cloud access nodes (CANs), where TNs offload the tasks to the FANs or CANs for multi-tier computing.} \label{Fig-system_model}
\end{center}
\end{figure}

\section{System Model}
\subsection{Multi-tier Computing Network}
\textcolor{black}{A cell-free massive MIMO-aided multi-tier computing network is shown in Fig.~\ref{Fig-system_model}, which consists of $K$ TNs with computational tasks (e.g., artificial intelligence model training), $K$ FANs and $M$ CANs, where the active TNs rely on node to node (N2N) communications for task offloading to FANs. The FANs are connected to the CNs by backhaul links using high-rate ultra-reliable optical fiber transmission.} The TNs are communicated with the FAN by wireless links, and the FANs offload the tasks to the CANs, while the CANs are connected to the CPU by backhaul links using high-rate optical fiber. As the next generation wireless network is expected to satisfy the extensive quality of service (QoS) requirements for wireless communications, a cell-free massive MIMO-based multi-tier computing architecture is proposed. Multi-tier computing organizes and manages the task computation and transmission from multiple heterogeneous CNs. Under this circumstance, the tasks can be offloaded to the FAN and CPU with the help of CANs, which can improve the performance of the overall computation efficiency. 

\subsection{Transmission Model}
\textcolor{black}{We consider independent and identically distributed (i.i.d.) quasi-static Rayleigh fading. In particular, each inter-node channel remains invariant within one time slot, but varies independently across different time slots and links.}
\textcolor{black}{To make full use of the spectrum, we consider that each TN occupies a part of the bandwidth to facilitate task offloading by frequency division multiple access (FDMA), which can avoid the TNs¡¯ co-channel interference.} Let $B$ and $\eta_{k}\in[0,1]$ be the total available bandwidth for the links between TNs and FANs and the bandwidth allocation variable for TN $k$, respectively, recall that the bandwidth allocated to the $i$th TN is $\eta_k B$. 
We denote $\bm{\eta}$ as the bandwidth allocation vector, which can be shown as $\bm{\eta}=[\eta_{1},\eta_{2},\cdots,\eta_{K}]$.

We denote $\alpha_{k}\in\{0,1\}$ and $\mathbf{a}=[\alpha_{1},\cdots,\alpha_{K}]$ as the offloading variable of TN $k$ and the offloading vector, respectively. As we consider binary task offloading, we have $\alpha_{k}=0$ if TN $k$ executes its task locally by FAN computing, and $\alpha_{k}=1$ if TN $k$ executes its task by remote CPU computing. \textcolor{black}{Let $p_b$ be the transmit power of each TN, which is selected prior to the subcarrier allocation.} Denote $z_k$ as the additive white Gaussian noise~(AWGN) with variance $\sigma_k^2$ at the $k$th FAN. Hence, the received task signal at the $k$th FAN is shown as
\begin{equation}\label{receives}
y_k=\sqrt{p_b}\sqrt{d_{k}}h_{k}x_k+z_k,\,\forall k\in\mathcal{K},
\end{equation}
where $d_{k}$ and $h_{k}\sim\mathcal{CN}(0,1)$ represent the channel path loss and small scale fading of the link between the $k$th TN and the $k$th FAN. Let $x_k$ be the transmit symbol of TN $k$. In addition, $x_k$ satisfies $\textsf{E}[|x_k|^2]=1$. 

Based on \eqref{receives}, the \textcolor{black}{signal-to-noise ratio (SNR)} of symbol $x_k$ observed by the $k$th FAN can be expressed as
\begin{equation}\label{fsinr}
\gamma_{\rm{FAN},k}=\frac{p_{b}d_{k}|h_{k}|^2}{\sigma_k^2}.
\end{equation}
Then, based on the bandwidth allocation and received SNR in \eqref{fsinr}, the achievable task offloading rate of the first hop from the $k$th TN to the $k$th FAN is given by
\begin{equation}\label{adrate}
r_{k}=\eta_kB\log_2\left(1+\frac{p_{b}d_{k}|h_{k}|^2}{\sigma_k^2}\right).
\end{equation}
Then, the task offloading time of the first hop is given by
\begin{equation}\label{firstdelay}
t^{\rm{TN}}_{k}=\frac{l_k}{\eta_kB\log_2\left(1+\frac{p_{b}d_{k}|h_{k}|^2}{\sigma_k^2}\right)}.
\end{equation}
The corresponding energy consumption is given by
\begin{equation}\label{firstenergy}
E^{\rm{TN}}_{k}=\frac{p_{b}l_k}{\eta_kB\log_2\left(1+\frac{p_{b}d_{k}|h_{k}|^2}{\sigma_k^2}\right)}.
\end{equation}

For the second hop, FAN $k$ offloads its task to the CAN. By the cell-free transmissions, CAN receives all the tasks from the FANs. Then, the task offloading time and task offloading energy consumption of the second hop are respectively given by
\begin{equation}\label{seconddelay}
t_{k}=\frac{\alpha_kl_k}{B\log_2\left(1+\gamma_k\right)},
\end{equation}
\begin{equation}\label{seconddelay2}
E^{\rm{FAN}}_{k}=\frac{q_k\alpha_kl_k}{B\log_2\left(1+\gamma_k\right)},
\end{equation}
where $\gamma_k$ represents the received SINR for the TN $k$ at CANs.

\subsection{Computational Model}
\subsubsection{Local Computing}
\textcolor{black}{Consider that each TN has a computational task for the requested service, and hence we denote $k_s$ as the task of TN $k$,} which can be specified by the task size with $l_{k}$ bits. \textcolor{black}{For simplicity of analysis, we assume that the task for each TN is generated instantly, and the task offloading latency is from the arrival in the TN. Regrading task computation in TN, we assume that TN does not have enough computation capacity to execute the task, the task need to be offloaded to execute on nearby FAN. If the task is still computed locally, the execution delay could be largely due to the limited computing resources at the TN \cite{2016-Shi-Edgecomp}.} 

\textcolor{black}{In our proposed multi-tier computing system, the total number of computing cycles of processing core is considered to be linearly proportional with the task size of each task to be processed \cite{2020-Wang-NOMAFOG,2020-Wang-MMIMO}}. \textcolor{black}{Let $C_{k}$ represent the number of computing cycles required for executing $1$-bit of input task at CN, and hence the total number of computing cycles required for executing the task from TN $k$ is $C_kl_{k}$, which depends both on the type of processing core and on the task to be executed.}
Therefore, the task computational latency for $k_s$ at CN is given by
\begin{equation}\label{t1}
t_{k,\rm{comp}}^F=\frac{C_k(1-\alpha_{k})l_{k}}{f_k^F},
\end{equation}
where $f_k^F$ represents the core computing frequency at the CN.
\subsubsection{Task Offloading}
In terms of task offloading, where the task is offloaded to be processed by remote CPU. In this case, the TN transmits the task to the CPU through two hops wireless links, i.e, TNs to FANs link and FANs to CANs link. \textcolor{black}{For the sake of simplicity, we assume that the CPU has multi-core, and each particular offloaded task can be independently assigned to a core Furthermore, each core is assumed to have} the same maximum computing frequency $f^{\max}_0$ (in cycles per second).
Therefore, we have computational latency of the task $k_s$ at CPU, which is given by
\begin{equation}\label{t2}
t_{k,\rm{comp}}^{C}=\frac{C_k\alpha_{k}l_{k}}{f_k^C},
\end{equation}
where $f_k^C$, $k\in\mathcal{K}$ represents the each core computing frequency in the CPU. \textcolor{black}{Benefit from} dynamic voltage and frequency scaling techniques (DVFS) \cite{2017-Mao-MEC}, we assume that $f_k^C$ is adjustable.

\section{Problem Formulation and Analysis}
\subsection{Energy Consumption and Delay Analysis}
Following the model in~\cite{2016-Chen-Offloading}, let $P_{\mathrm{CN}}$ denote the computing energy consumption for each cycle at CN for local task computing. Then, $C_{k}P_{\mathrm{CN}}$ represents the computing energy consumption per bit of each task. According to the task allocation decision, \textcolor{black}{there are $(1-\alpha_{k})l_{k}$-bits input task required to be processed at $k$th CN.} Thus, the computational energy consumption of the task $k_s$ at the $k$-th CN for local task computing is given by
\begin{equation}\label{Cenergy}
E_{\mathrm{re},k}=C_k(1-\alpha_{k})l_{k}P_{\mathrm{CN}}.
\end{equation}
Specifically, the computational energy consumption of the task $k_s$ at CPU is not considered, as the energy capacity of CPU is very large.
Then, the total computational energy consumption for task $k_s$ is given by
\begin{equation}\label{totalcenergyk}
E_{\mathrm{com},k}=E_{\mathrm{re},k}=C_0(1-\alpha_{k})l_{k}P_{\mathrm{CN}}.
\end{equation}
In all, the total task computation energy consumption for task $k_s$ is composed of total task transmission energy and total task computation energy, which is given by
\begin{equation}\label{totalenergyk}
\begin{aligned}
E_{\mathrm{total},k}=&E^{\rm{TN}}_{k}+E_{\mathrm{re},k}+E^{\rm{FAN}}_{k}\\
=&\frac{p_{b}l_k}{\eta_kB\log_2\left(1+\frac{p_{b,m}d_{b,m}|h_{b,m}|^2}{\sigma_m^2}\right)}+\\
&C_0(1-\alpha_{k})l_{k}P_{\mathrm{CN}}
+\frac{q_k\alpha_kl_k}{B\log_2\left(1+\gamma_k\right)}.
\end{aligned}
\end{equation}

\textcolor{black}{The total task offloading latency consists of the task computation delay plus the task transmission delay, and it is from the arrival in the TN. According to \eqref{firstdelay}, \eqref{seconddelay}, \eqref{t1}, and \eqref{t2}, the total task offloading latency of task $k_s$ is given by}
\begin{equation}\label{delayk}
\begin{aligned}
T_{\rm{total},k}=&t^{\rm{TN}}_{k}+t_{k,\rm{comp}}^F+t_{k}+t_{k,\rm{comp}}^{C}\\
=&\frac{l_k}{\eta_kB\log_2\left(1+\frac{p_{b}d_{b,m}|h_{b,m}|^2}{\sigma_m^2}\right)}+\frac{C_k(1-\alpha_{k})l_{k}}{f_k^F}\\
&+\frac{\alpha_kl_k}{B\log_2\left(1+\gamma_k\right)}+\frac{C_k\alpha_{k}l_{k}}{f_k^C}.
\end{aligned}
\end{equation}

\subsection{Problem Formulation}
In this section, the bandwidth and task allocations framework is discussed for heterogeneous delay requirements of the tasks under the cell-free massive MIMO setting, and the related optimization problem is formulated. Specifically, the total task computation cost in terms of latency cost and energy consumption minimization is formulated to improve the task execution efficiency of multi-tier task computation.

According to \eqref{totalenergyk} and \eqref{delayk}, the weighted sum cost for each TN $k$ in terms of latency cost and energy consumption can be obtained, which is given by
\begin{equation}
\Omega_{k}=\mu_kE_{\mathrm{total},k}+(1-\mu_k)T_{\rm{total},k},
\end{equation}
where $\mu_k$ and $(1-\mu_k)$ represent the weight of latency cost and energy consumption, respectively, and $0\leq\mu_k\leq 1$. $\mu_k$ reflects the priority of different task requirements, under the condition of delay sensitive task, $\mu_k$ is small. On the other hand, if the task is delay tolerant, $\mu_k$ is large, which means total energy consumption is more important than delay performance for task computation. For simplicity of analysis, we set $\mu_k=0$ for delay sensitive services, and we set $\mu_k=1$ for delay tolerant services.
Then, the total cost consists of computational energy consumption and computational delay, which can be formulated as
\begin{equation}
\begin{aligned}
\Omega_{\rm{total}}=&\sum_{k=1}^Kw_k\Omega_{k}\\
=&\sum_{k=1}^Kw_k\left\{\mu_kE_{\mathrm{total},k}+(1-\mu_k)T_{\rm{total},k}\right\},
\end{aligned}
\end{equation}
where $w_k$ is the weighting factor for TN $k$.

Hence, the total cost minimization optimized problem of the proposed cell-free massive MIMO asisted multi-tier computing systems is expressed as
\begin{subequations}\label{originalP}
\begin{align}
\min\limits_{\bm{\eta},\bm{\alpha}} \quad  &\Omega_{\rm{total}},\\
\rm{s.t.}\quad     
      &0 < \eta_k< 1, \forall k,\\
       &\sum_{k=1}^K\eta_k= 1, \forall k,\\
         \label{five}  & \alpha_{k}\in  \{0,1\},\forall k.
 \end{align}
\end{subequations}

\subsection{Problem Analysis}
\textcolor{black}{Using the above described problem formulation, we need to analyze Problem \eqref{originalP}. Note that $\alpha_{k}$ is a binary variable, and hence Problem \eqref{originalP} has an non-convex feasible set. As binary variables has product-based relationship, the objective function of Problem \eqref{originalP} is also non-convex. As we know that it is challenging to find global optimum for a mixed discrete and non-convex optimization problem \cite{2013-Xiao-QoS}. Since the feasible set and objective function are both non-convex, this problem is NP hard. Thus, we have to simplify Problem \eqref{originalP}.}

From \eqref{totalenergyk} and \eqref{delayk}, it can be observed that in the expressions for the task computation latency and energy consumption, the bandwidth allocation variable $\bm{\eta}$ and task offloading variable $\bm{\alpha}$ are separated. In this case, the task allocation and the bandwidth allocation constraints are separable. Therefore, we are capable of decoupling the original task offloading Problem \eqref{originalP} into two sub-problems: bandwidth allocation optimization (i.e., $\bm{\eta}$) and task allocation optimization (i.e., $\bm{\alpha}$). Note that the bandwidth allocation is based on the first hop task transmission from the TNs to the FANs, and the task allocation is based on the second hop task offloading from the FANs to CANs with the aid of cell-free massive MIMO.

\section{Cost Minimization Schemes}
In this section, the bandwidth allocation and task offloading is optimized for heterogeneous delay requirements of the tasks by the asymptotic property of the massive MIMO to minimize the total task computation cost in our proposed cell-free massive MIMO-aided multi-tier computing systems.

\subsection{Uplink Task Transmission}
Consider that task transmissions from FANs to CANs are regarded as uplink task offloading, where all FANs send their received tasks to the CANs. We denote the channel fading coefficient between the FAN $k$ and the CAN $m$ as $g_{m,k}$, which can be expressed as \cite{2017-Ngo-CFMMIMO}
\begin{equation}
g_{m,k}=\sqrt{\beta_{m,k}}h_{m,k},
\end{equation}
where $\beta_{m,k}$ denotes the large-scale fading and $h_{m,k}\sim\mathcal{CN}(0,1)$ denotes the small-scale channel fading between the link from the $k$th TN and the FAN $m$. Upon receiving the signal, all the FANs delivered their symbols simultaneously to the CAN, which can be expressed as 
\begin{equation}
\mathbf{x}=\sqrt{\rho q_k}\mathbf{s},
\end{equation}
where $\mathbf{s}=[s_1,\cdots,s_K]^T$ denotes the transmission information-bearing signal vector with $\textsf{E}(\mathbf{s}\mathbf{s}^\dag)=\mathbf{I}_K$, and $s_k$ ($\textsf{E}[|s_k|^2]=1$) denotes the signal transmitted from the $k$th TN to its paired FAN, and $q_k$ denotes the task transmission power from the FAN $k$. Additionally, $\rho$ represents the normalized uplink signal-to-noise ratio (SNR).
The delivered symbol from FAN $k$ can be expressed as
\begin{equation}
x_k=\sqrt{\rho q_k}s_k.
\end{equation}

Note that each CAN receives aggregated signal from all the FANs, and the received symbol at the $m$th CAN can be expressed as
\begin{equation}
y_m=\sqrt{\rho}\sum_{k=1}^Kg_{m,k}\sqrt{q_k}s_k+n_m,
\mathbf{y}_m=\mathbf{g}_{m,k}\sqrt{\rho q_k}\mathbf{s}+\mathbf{n}_m,
\end{equation}
where $n_{m}\sim\mathcal{CN}(0,1)$ denotes the noise at CAN $m$. Additionally, we employ a matched filtering strategy at the CANs. In this case, the received symbol can be weighted appropriately. To be more precisely, the received symbol $y_m$ at the $m$th CAN need to be first multiplied by $\hat{\mathbf{g}}^\ast_{m,k}$.
Let $\mathbf{\hat{g}}_m$ denote the estimated channel state information (CSI) of all FANs to $m$th CAN. Under this circumstance, the actual channel coefficient can be expressed
as~\cite{2011-Andrews-MIMO}
\begin{equation}
\mathbf{g}_m=\sqrt{1-\tau_D^2}\mathbf{\hat{g}}_m+\tau_D\mathbf{\Omega}_D,
\end{equation}
where $\mathbf{\Omega}_D\in\mathbb{C}^{K\times M}$ represents the channel estimation noise independent of the estimated channel, which has i.i.d entries with zero mean and unit variance, and $\tau_D\in [0,1]$ reflects the estimation accuracy of the channel coefficient matrix $\mathbf{\hat{g}}_m$. Under the condition of $\tau_D=0$, we have perfect CSI estimation. Under the condition of $\tau_D=1$, the CSI is completely unknown. \textcolor{black}{Due to the estimation errors, limited CSI feedback quantization, delays, etc. On the other hand, the Quality of Experience (QoE) of computation heavily relies on the wireless fading channel conditions since task offloading requires effective wireless transmission. Thus, we consider imperfect CSI for task offloading.}
Given the imperfect CSI of receiver, the precoding matrix of the CAN is similar to \cite{2020-Wang-MMIMO,2022-Wang-MMIMOCache}, which can be expressed as
\begin{equation}
\hat{\mathbf{g}}^\ast_{m}=\mathbf{\hat{g}}_{m}^{\dagger},\label{detector1}
\end{equation}
where $\mathbf{\hat{g}}_{m}^{\dagger}=(\mathbf{\hat{g}}_{m}^\mathrm{H}\mathbf{\hat{g}}_{m})^{-1}\mathbf{\hat{g}}_{m}^\mathrm{H}$.

The multiplied symbol of $\hat{\mathbf{g}}^\ast_{m,k}\mathbf{y}_m$ is then delivered to CPU via backhaul link using high-throughput optical fiber for signal detection. 
Then, the received data at CPU is aggregated, which can be expressed as
\begin{equation}\label{CPUrece}
\begin{aligned}
\mathbf{r}_{\mathrm{CPU}}&=\sum_{m=1}^M\hat{\mathbf{g}}^\ast_{m}\mathbf{y}_m\\
=&\sqrt{\rho q_k}\sum_{m=1}^M\mathbf{g}_{m}^{\dagger}\mathbf{g}_{m}\mathbf{s}+\sum_{m=1}^M\mathbf{g}_{m}^{\dagger}\mathbf{n}_m\\
=&\sqrt{\rho q_k}\sum_{m=1}^M(\mathbf{\hat{g}}_{m}^\mathrm{H}\mathbf{\hat{g}}_{m})^{-1}\mathbf{\hat{g}}_{m}^\mathrm{H}\left(\sqrt{1-\tau_D^2}\mathbf{\hat{g}}_m+\tau_D\mathbf{\Omega}_D\right)\mathbf{s}\\
&+\sum_{m=1}^M(\mathbf{\hat{g}}_{m}^\mathrm{H}\mathbf{\hat{g}}_{m})^{-1}\mathbf{\hat{g}}_{m}^\mathrm{H}\mathbf{n}_m\\
=&\sqrt{\rho q_k(1-\tau_D^2)}\sum_{m=1}^M\mathbf{s}+\tau_D\sqrt{\rho q_k}\sum_{m=1}^M(\mathbf{\hat{g}}_{m}^\mathrm{H}\mathbf{\hat{g}}_{m})^{-1}\mathbf{\hat{g}}_{m}^\mathrm{H}\mathbf{\Omega}_D\mathbf{s}\\
&+\sum_{m=1}^M(\mathbf{\hat{g}}_{m}^\mathrm{H}\mathbf{\hat{g}}_{m})^{-1}\mathbf{\hat{g}}_{m}^\mathrm{H}\mathbf{n}_m
\end{aligned}
\end{equation}
\footnote{Similar to \cite{2017-Ngo-CFMMIMO,2017-Nayebi-CFMMIMO}, the CANs are transmitted with the CPU through perfect communication links. These perfect communication links might be established through optical fiber between the CANs and CPU. Additionally, the communication links between CANs and CPU can be realized by copperbased backhaul links, which is capable of providing a capacity of $750$ Mbits/s with the maximum range of $1.5$ km according to \cite{2017-Imran-5G}.}

Denote $r_k$ as the aggregated received signal for TN $k$ in \eqref{CPUrece}, which can be expressed as
\begin{equation}\label{CPUk}
\begin{aligned}
r_k=&\sqrt{\rho(1-\tau_D^2)}\sum_{m=1}^M\sqrt{q_{k}}s_{k}+\\
&\tau_D\sqrt{\rho q_k}\sum_{m=1}^M(\mathbf{\hat{g}}_{m,k}^\mathrm{H}\mathbf{\hat{g}}_{m,k})^{-1}\mathbf{\hat{g}}_{m,k}^\mathrm{H}\mathbf{\Omega}_{D,k}s_k+\\
&\sum_{m=1}^M(\mathbf{\hat{g}}_{m,k}^\mathrm{H}\mathbf{\hat{g}}_{m,k})^{-1}\mathbf{\hat{g}}_{m,k}^\mathrm{H}n_{m,k}
\end{aligned}
\end{equation}
Consider the worst-case of the uncorrelated Gaussian noise, we can obtain the received SINR of the transmitted signal in \eqref{CPUk} \cite{2017-Ngo-CFMMIMO}, which is shown as \eqref{uplinkSINR} in the bottom of this page.
\begin{figure*}[hb]
\normalsize
\begin{equation}\label{uplinkSINR}
\gamma_k=\frac{\rho(1-\tau_D^2)|\sum_{m=1}^M\sqrt{q_{k}}|^2}{\rho\tau_D^2q_k|\sum_{m=1}^M(\mathbf{\hat{g}}_{m,k}^\mathrm{H}\mathbf{\hat{g}}_{m,k})^{-1}\mathbf{\hat{g}}_{m,k}^\mathrm{H}\mathbf{\Omega}_{D,k}s_k|^2+|\sum_{m=1}^M(\mathbf{\hat{g}}_{m,k}^\mathrm{H}\mathbf{\hat{g}}_{m,k})^{-1}\mathbf{\hat{g}}_{m,k}^\mathrm{H}n_{m,k}|^2}
\end{equation}
\hrulefill
\vspace*{10pt}
\end{figure*}
As cell-free massive MIMO systems support a very large number of CANs serving a small number of FANs, we characterize the asymptotic property of the received SINR in \eqref{uplinkSINR} with the cell-free massive MIMO setting in the following theorem,  which is shown in the bottom of the next page. By this theorem, we can solve the task allocation subproblem.
\newtheorem{theorem}{Theorem}
\begin{theorem}\label{tn1}
As the number of CANs goes to infinity, i.e., $M\rightarrow\infty$, the received SINR in \eqref{uplinkSINR} is capable of being asymptotically expressed as
\begin{equation}\label{infSINRo}
\gamma_k=\frac{q_k\rho(1-\tau_D^2)}{\frac{\sigma^2}{M}\sum_{m=1}^M\sum_{k=1}^{K}\lambda^{-1}_{k,m}}.
\end{equation}
\end{theorem}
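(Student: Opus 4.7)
The plan is to drive \eqref{uplinkSINR} to its asymptotic limit by applying the law of large numbers (channel hardening) to each of the three sums appearing in it as $M\to\infty$, and then matching scalings between numerator and denominator. Writing $\hat{\mathbf{g}}_{m,k}^{\dagger}=\hat{g}_{m,k}^{H}/|\hat{g}_{m,k}|^{2}$ in the scalar case, the three key sums are the deterministic $S_{0}=\sum_{m=1}^{M}\sqrt{q_{k}}$ in the numerator, the channel-estimation-error sum $S_{1}=\sum_{m=1}^{M}(|\hat{g}_{m,k}|^{2})^{-1}\hat{g}_{m,k}^{H}\Omega_{D,k,m}s_{k}$, and the noise sum $S_{2}=\sum_{m=1}^{M}(|\hat{g}_{m,k}|^{2})^{-1}\hat{g}_{m,k}^{H}n_{m,k}$.

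First, I would simplify the numerator: since $\sqrt{q_{k}}$ is deterministic, $|S_{0}|^{2}=M^{2}q_{k}$, giving a numerator that grows like $\rho(1-\tau_{D}^{2})M^{2}q_{k}$. Next, I would dispatch $|S_{1}|^{2}$ by arguing that, by the orthogonality principle of MMSE estimation, $\Omega_{D,k,m}$ is zero-mean, unit-variance, and independent of $\hat{g}_{m,k}$; hence the summands of $S_{1}$ are independent, zero-mean random variables with variances $\mathbb{E}[|\hat{g}_{m,k}|^{-2}]$. Consequently $\mathbb{E}[|S_{1}|^{2}]=O(M)$, so after the $M^{2}$ normalization that will be imposed below, the entire estimation-error contribution $\rho\tau_{D}^{2}q_{k}|S_{1}|^{2}/M^{2}$ vanishes almost surely, explaining the absence of a $\tau_{D}^{2}$ residual in the denominator of \eqref{infSINRo}. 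For $|S_{2}|^{2}$, conditioning on the channels and using the independence of $n_{m,k}$ across $m$ with variance $\sigma^{2}$, the conditional second moment is $\sigma^{2}\sum_{m}|\hat{g}_{m,k}|^{-2}$; invoking the strong law of large numbers on the inverse-channel-gain empirical average, this sum is asymptotically $\sigma^{2}M\cdot\mathbb{E}[|\hat{g}_{m,k}|^{-2}]$, which I would identify with $\sigma^{2}\sum_{m=1}^{M}\sum_{k=1}^{K}\lambda_{k,m}^{-1}$ once the cross-user leakage through the matched filter is retained and the $\lambda_{k,m}$ are interpreted as the relevant (large-scale) variance parameters of $\hat{g}_{m,k}$.

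Putting these pieces together, both numerator and dominant denominator term scale as $M^{2}$ once the noise contribution is written in the form $M\cdot\sigma^{2}\sum_{m,k}\lambda_{k,m}^{-1}$; dividing through by $M^{2}$ collapses the ratio to $q_{k}\rho(1-\tau_{D}^{2})\big/\bigl(\tfrac{\sigma^{2}}{M}\sum_{m=1}^{M}\sum_{k=1}^{K}\lambda_{k,m}^{-1}\bigr)$, which is exactly \eqref{infSINRo}. The main obstacle is the second step: rigorously justifying that $|S_{1}|^{2}/M^{2}\to 0$ and $|S_{2}|^{2}/M\to\sigma^{2}\sum_{m,k}\lambda_{k,m}^{-1}$ almost surely requires uniform integrability or a bounded-moment assumption on $|\hat{g}_{m,k}|^{-2}$ (the inverse of an exponential has a heavy left tail), so I would either truncate $|\hat{g}_{m,k}|^{-2}$, invoke an MMSE-estimator bias away from zero, or replace the strict matched-filter inverse by a regularized version consistent with the paper's prior usage, and then confirm that the same limit is obtained.
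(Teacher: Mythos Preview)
Your overall strategy---apply a law-of-large-numbers/channel-hardening argument to each of the three sums in \eqref{uplinkSINR} and compare scalings---is exactly the paper's strategy. The paper invokes Tchebyshev's theorem to kill the estimation-error sum $S_{1}$ after dividing by $M$ (matching your $|S_{1}|^{2}/M^{2}\to 0$ argument), and then handles the noise sum $S_{2}$.

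The one substantive divergence is in the treatment of $S_{2}$ and the identification of the $\lambda_{k,m}$. You work with a scalar matched filter $\hat g_{m,k}^{H}/|\hat g_{m,k}|^{2}$, obtain $\sigma^{2}\sum_{m}|\hat g_{m,k}|^{-2}$, and then try to reconcile the inner sum over $k$ by appealing to ``cross-user leakage'' and interpreting the $\lambda_{k,m}$ as large-scale variance parameters. That is not what the paper does and not what $\lambda_{k,m}$ means here: the paper keeps the full matrix inverse, performs the eigendecomposition $\hat{\mathbf g}_{m}^{\mathrm H}\hat{\mathbf g}_{m}=\mathbf Q\mathbf\Lambda\mathbf Q^{\mathrm H}$ with $\mathbf\Lambda=\mathrm{diag}\{\lambda_{1,m},\dots,\lambda_{K,m}\}$, and then reads off the noise power via the trace identity $\mathrm{tr}\bigl((\hat{\mathbf g}_{m}^{\mathrm H}\hat{\mathbf g}_{m})^{-1}\bigr)=\sum_{k}\lambda_{k,m}^{-1}$. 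So the $\sum_{k}$ in the denominator is not leakage at all---it is simply the trace of the inverse Gram matrix, and the $\lambda_{k,m}$ are eigenvalues, not fading variances. Your scalar reduction cannot produce that structure; you need to retain the $K$-dimensional pseudoinverse to land on \eqref{infSINRo} as stated.

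Your caution about the heavy left tail of $|\hat g_{m,k}|^{-2}$ (or, in the matrix version, the smallest eigenvalue of the Gram matrix) is more careful than the paper, which simply asserts the limits without addressing moment conditions; if you want to be rigorous beyond what the paper requires, your proposed regularization or truncation would be the right fix, but for matching the paper you can drop it.
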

\begin{proof}[Proof]
The proof is given in Appendix $A$.
\end{proof}

Based on Theorem \ref{tn1}, the received SINR for TN $k$ only related to transmit power from the $k$th FAN for very large $M$.

\subsection{Bandwidth Allocation}
As second-order derivative of the objective function is strictly negative, we have that $T_{\rm{total}}$ is a convex function of $\bm{\eta}$, so we obtain that optimization problem \eqref{originalP} is a convex optimization problem. Then, we refer to the Lagrangian multiplier method to solve problem \eqref{originalP}. We denote $\kappa$, $\nu$ and $\xi$ as the Lagrange multipliers associated with each constraint in problem \eqref{originalP}. Consequently, we have the Lagrangian function as
\begin{multline}
L(\bm{\eta},\bm{\kappa},\bm{\nu},\bm{\xi})=\Omega_{\rm{total}}(\bm{\eta})+\sum_{k=1}^K\kappa_k\eta_{k}+\sum_{k=1}^K\nu_k(1-\eta_{k})\\
+\sum_{k=1}^K\xi_k(\sum_{k=1}^K\eta_{k}-B).
\end{multline}
Then, we have
\begin{equation}\label{lagrange}
\begin{aligned}
\frac{\partial L(\bm{\eta},\bm{\kappa},\bm{\nu},\bm{\xi})}{\partial \eta_{k}}=&\left(\frac{-\mu_kp_{b,m}l_i}{\eta^2_kB\log_2\left(1+\frac{p_{b,m}d_{b,m}|h_{b,m}|^2}{\sigma_m^2}\right)}-\right.\\
&\left.\frac{(1-\mu_k)l_k}{\eta^2_kB\log_2\left(1+\frac{p_{b,m}d_{b,m}|h_{b,m}|^2}{\sigma_m^2}\right)}\right)+\kappa_k\\
&-\nu_k+\xi_k.
\end{aligned}
\end{equation}
Next, the Karush-Kuhn-Tucker (KKT) conditions can be expressed as
\begin{equation}\label{KKT1}
\frac{\partial L(\bm{\eta}^{\ast},\bm{\kappa},\bm{\nu},\bm{\xi})}{\partial \eta^{\ast}_{k}}=0,\quad\forall k\in\mathcal{K},
\end{equation}
\begin{equation}\label{KKT2}
\kappa_k\eta_{k}^{\ast}=0,\quad\nu_k(1-\eta_{k}^{\ast})=0,\quad\forall k\in\mathcal{K},
\end{equation}
\begin{equation}
\xi_k(\sum_{k=1}^K\eta_{k}^{\ast}-B)=0,\quad\forall k\in\mathcal{K},
\end{equation}
\begin{equation}
\sum_{k=1}^K\eta_{k}^{\ast}=B,\quad 0\leq \eta_{k}^{\ast}\leq 1,\quad\forall k\in\mathcal{K}.
\end{equation}
Based on \eqref{lagrange} and \eqref{KKT1}, we arrive at
\begin{equation}
\begin{aligned}
\xi_k=&\frac{1}{(\eta^\ast_k)^2}\left(\frac{(1-\mu_k)l_k}{B\log_2\left(1+\frac{p_{b}d_{b,m}|h_{b,m}|^2}{\sigma_m^2}\right)}+\right.\\
&\left.\frac{\mu_kp_{b}l_k}{B\log_2\left(1+\frac{p_{b}d_{b,m}|h_{b,m}|^2}{\sigma_m^2}\right)}\right)-\kappa_k+\nu_k.
\end{aligned}
\end{equation}
As a result, the optimal solution can be summarized as follows:
\begin{itemize}
\item  Under the condition that\begin{equation}
\begin{aligned}
\xi_k\geq&\frac{1}{(\eta^\ast_k)^2}\left(\frac{(1-\mu_k)l_k}{B\log_2\left(1+\frac{p_{b}d_{b,m}|h_{b,m}|^2}{\sigma_m^2}\right)}+\right.\\
&\left.\frac{\mu_kp_{b}l_k}{B\log_2\left(1+\frac{p_{b}d_{b,m}|h_{b,m}|^2}{\sigma_m^2}\right)}\right).
\end{aligned}
\end{equation} we have $\eta^{\ast}_{k}=0$, $\nu_k=0$ and $\kappa_k\geq 0$ based on \eqref{KKT2}.
\item   Under the condition that \begin{equation}
\begin{aligned}
\xi_k\leq&\frac{1}{(\eta^\ast_k)^2}\left(\frac{(1-\mu_k)l_k}{B\log_2\left(1+\frac{p_{b}d_{b,m}|h_{b,m}|^2}{\sigma_m^2}\right)}+\right.\\
&\left.\frac{\mu_kp_{b}l_k}{B\log_2\left(1+\frac{p_{b}d_{b,m}|h_{b,m}|^2}{\sigma_m^2}\right)}\right).
\end{aligned}
\end{equation} we have $\eta^{\ast}_{k}=1$, $\kappa_k=0$ and $\nu_k\geq 0$ based on \eqref{KKT2}.
\item  If $0<\eta^{\ast}_{k}<1$, we have $\nu_k=\kappa_k=0$ based on \eqref{KKT2}. Thus, we have
\begin{equation}
\begin{aligned}
\xi_k=&\frac{1}{(\eta^\ast_k)^2}\left(\frac{(1-\mu_k)l_k}{B\log_2\left(1+\frac{p_{b}d_{b,m}|h_{b,m}|^2}{\sigma_m^2}\right)}+\right.\\
&\left.\frac{\mu_kp_{b}l_k}{B\log_2\left(1+\frac{p_{b}d_{b,m}|h_{b,m}|^2}{\sigma_m^2}\right)}\right).
\end{aligned}
\end{equation}
Then, the optimal solution can be derived as
$\eta^{\ast}_{k}=\sqrt{\frac{(1-\mu_k)l_k}{\xi_kB\log_2\left(1+\frac{p_{b}d_{b,m}|h_{b,m}|^2}{\sigma_m^2}\right)}+\frac{\mu_kp_{b,m}l_k}{\xi_kB\log_2\left(1+\frac{p_{b}d_{b,m}|h_{b,m}|^2}{\sigma_m^2}\right)}}$, $\forall k\in \mathcal{K}$.
\end{itemize}

Since $0<\eta_{k}<1$, $\forall k$, the optimal bandwidth allocation results $\bm{\eta}^\ast$ of Problem \eqref{originalP} can be obtained as \eqref{optimalban}, which is shown in the bottom of the next page.
\begin{figure*}[hb]
\normalsize
\begin{equation}\label{optimalban}
\eta^{\ast}_{k}=\sqrt{\frac{(1-\mu_k)l_k}{\xi_kB\log_2\left(1+\frac{p_{b}d_{b,m}|h_{b,m}|^2}{\sigma_m^2}\right)}+\frac{\mu_kp_{b,m}l_k}{\xi_kB\log_2\left(1+\frac{p_{b,m}d_{b,m}|h_{b,m}|^2}{\sigma_m^2}\right)}},\quad \forall k.
\end{equation}
\hrulefill
\vspace*{10pt}
\end{figure*}

\subsection{Optimizing Task Allocation}
Since problem \eqref{originalP} is a mixed integer programming problem with respect to $\bm{\alpha}$, to obtain its globally optimal solution is NP-hard. Under the circumstances, the partially dualized method of \cite{2004-Boyd-CO} can be adopted to solve such problem. Then, we denote $\bm{\psi}=[\psi_1,\cdots,\psi_K]$ as the
dual variable for the constraint \eqref{five}. Consequently, the Lagrangian function can be expressed as
\begin{multline}
F\left(\bm{\alpha},\bm{\psi}\right)=\Omega_{\rm{total}}(\bm{\alpha})
+\sum\limits_{k\in\mathcal{K}}\psi_k\left(\sum\limits_{k\in\mathcal{K}}\alpha_{k}-K\right)\\
=\sum\limits_{k\in\mathcal{K}}\left\{\frac{(1-\mu_k)l_k}{\eta_kB\log_2\left(1+\frac{p_{b}d_{b,m}|h_{b,m}|^2}{\sigma_m^2}\right)}+\frac{C_k(1-\mu_k)(1-\alpha_{k})l_{k}}{f_k^F}\right.\\
\left.+\frac{(1-\mu_k)\alpha_kl_k}{B\log_2\left(1+\gamma_k\right)}+\frac{C_k(1-\mu_k)\alpha_{k}l_{k}}{f_k^C}+\right.\\
\left.\frac{\mu_kp_{b}l_k}{\eta_iB\log_2\left(1+\frac{p_{b}d_{b,m}|h_{b,m}|^2}{\sigma_m^2}\right)}+C_0\mu_k(1-\alpha_{k})l_{k}P_{\mathrm{CN}}\right.\\
\left.+\frac{\mu_kq_k\alpha_kl_k}{B\log_2\left(1+\gamma_k\right)}\right\}+\psi\left(\sum\limits_{k\in\mathcal{K}}\alpha_{k}-K\right)\\
=\sum\limits_{k\in\mathcal{K}}\left(\frac{(1-\mu_k)l_k}{B\log_2\left(1+\gamma_k\right)}+\frac{(1-\mu_k)C_kl_{k}}{f_k^C}-\frac{(1-\mu_k)C_kl_{k}}{f_k^F}+\right.\\
\left.\frac{\mu_kq_kl_k}{B\log_2\left(1+\gamma_k\right)}-C_0\mu_kl_{k}P_{\mathrm{CN}}+\psi\right)\alpha_k+K\psi\\
+\sum\limits_{k\in\mathcal{K}}\left(\frac{(1-\mu_k)l_k}{\eta_kB\log_2\left(1+\frac{p_{b}d_{b,m}|h_{b,m}|^2}{\sigma_m^2}\right)}+\frac{C_kl_{k}}{f_k^F}\right.\\
\left.+\frac{\mu_kp_{b}l_k}{\eta_iB\log_2\left(1+\frac{p_{b}d_{b,m}|h_{b,m}|^2}{\sigma_m^2}\right)}\right).
\end{multline}

In this case, we can obtain the the partially dualized problem of the original Problem~\eqref{originalP}, which is given by
\begin{equation}\label{dual}
\begin{array}{ll}
f(\bm{\alpha},\bm{\psi})=\min\limits_{\alpha_{j}} \quad  F\left(\{\alpha_{j}\},\bm{\psi}\right),\\
\text{s.t.} \begin{array}[t]{rcl}
     \alpha_{j}&\in &  \{0,1\}, \forall j\in\mathcal{K},\\
      \sum_{j=1}^K\alpha_{j}&\leq & K.
       \end{array}
\end{array}
\end{equation}
Then, the analytical task allocation results can be explicitly obtained for fixed dual variable $\psi_k$, which is shown as \eqref{offs1} in the bottom of the next page. As $k^\ast$ is not unique, the tasks will be computed remotely at CAN with $k^\ast=1$.
\begin{figure*}[hb]
\normalsize
\begin{equation}\label{offs1}
\alpha^{\ast}_{k}=\left\{\begin{matrix}
1, & \text{if}~k^\ast=\arg\min\limits_{k'\in\mathcal{K}}(\frac{(1-\mu_{k'})l_{k'}}{B\log_2\left(1+\gamma_{k'}\right)}+\frac{(1-\mu_{k'})C_{k'}L_{k',s}}{f_{k'}^C}-\frac{(1-\mu_{k'})C_{k'}l_{k'}}{f_{k'}^F}+\\
&\frac{\mu_{k'}q_{k'}l_{k'}}{B\log_2\left(1+\gamma_{k'}\right)}-C_0\mu_{k'}l_{k'}P_{\mathrm{CN}}+\psi),\\
0, & \text{otherwise},
\end{matrix}\right.
\end{equation}
\hrulefill
\vspace*{10pt}
\end{figure*}
The optimal dual variable $\psi_k^{\ast}$ can be obtained by applying the subgradient approach of~\cite{2004-Boyd-CO} or the coordinate descent approach \cite{2014-Yu-price}. According to the above results, we have a theorem shown in the following.
\begin{theorem}\label{tn2}
For binary task offloading in cell-free massive MIMO systems, when the number of CANs tends to infinity, i.e., $M\rightarrow\infty$, all the tasks will be computed remotely at CAN for the delay tolerant services, i.e., $\mu_{k}=1$, $\forall k$. On the other hand, only the tasks from node $i^\ast$ will be computed remotely at CAN, where FAN $i^\ast$ has the lest computational capacity. The other tasks will be computed locally at the FAN for the delay sensitive services, i.e., $\mu_{k}=0$, $\forall k$.
\end{theorem}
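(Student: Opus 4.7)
The plan is to specialise the closed-form task-allocation rule \eqref{offs1} to the two extreme weight settings $\mu_k\equiv 1$ and $\mu_k\equiv 0$, and to use Theorem \ref{tn1} to control the wireless-transmission summands in the large-$M$ limit. I would first isolate the per-node coefficient inside the $\arg\min$ of \eqref{offs1}: it decomposes as transmission terms proportional to $1/\log_2(1+\gamma_k)$ plus computation terms depending on $f_k^F$, $f_k^C$ and $P_{\mathrm{CN}}$. By Theorem \ref{tn1}, as $M\to\infty$ the SINR $\gamma_k$ converges to a deterministic quantity driven solely by $q_k$ and bulk channel statistics, so $\log_2(1+\gamma_k)$ is bounded and bounded away from zero; both $\frac{(1-\mu_k)l_k}{B\log_2(1+\gamma_k)}$ and $\frac{\mu_k q_k l_k}{B\log_2(1+\gamma_k)}$ therefore stabilise.

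For the delay-tolerant case $\mu_k=1$, substituting into the coefficient leaves $\phi_k=\frac{q_k l_k}{B\log_2(1+\gamma_k)} - C_0 l_k P_{\mathrm{CN}} + \psi$. Since the dualised cardinality constraint $\sum_k\alpha_k\le K$ in \eqref{dual} is trivially satisfied by any $\bm{\alpha}\in\{0,1\}^K$, complementary slackness forces $\psi^{\ast}=0$; in the asymptotic limit the local-computation penalty $C_0 l_k P_{\mathrm{CN}}$ dominates the bounded first term, so $\phi_k<0$ uniformly in $k$. Every index then achieves the minimum in \eqref{offs1} (the non-unique $\arg\min$ situation flagged immediately below \eqref{offs1}), and hence $\alpha_k^{\ast}=1$ for every $k$, which is the first claim.

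For the delay-sensitive case $\mu_k=0$, the coefficient reduces to $\phi_k=\frac{l_k}{B\log_2(1+\gamma_k)} + C_k l_k\bigl(1/f_k^C-1/f_k^F\bigr) + \psi$. Theorem \ref{tn1} renders the transmission term asymptotically negligible compared with the computation difference, and the homogeneous per-core CPU frequency $f_k^C=f_0^{\max}$ assumed in the system model kills the $k$-dependence of $1/f_k^C$. The only $k$-dependent piece that matters is therefore $-C_k l_k/f_k^F$, so the $\arg\min$ is uniquely attained at $i^{\ast}=\arg\min_k f_k^F$, namely the FAN with the least computing capacity. Applying \eqref{offs1} then gives $\alpha_{i^{\ast}}^{\ast}=1$ while all other tasks execute locally on their FAN, which is the second claim.

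The step I expect to be the main obstacle is the tie-breaking argument for $\mu_k=1$: \eqref{offs1} as written selects a single minimising index, so concluding ``all tasks offload'' requires invoking the non-uniqueness convention stated in the paper and reconciling it with the primal-dual optimality conditions at $\psi^{\ast}=0$. One has to check that when every $\phi_k$ is strictly negative in the asymptotic limit, the slack of $\sum_k\alpha_k\le K$ indeed permits the all-ones allocation; this is the delicate point where the finite-$M$ argument would need either a strict-tie interpretation of \eqref{offs1} or a re-derivation of the rule in the more standard form that assigns $\alpha_k=1$ to every index with a strictly negative coefficient.
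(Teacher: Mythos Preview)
Your approach is essentially the same as the paper's: substitute $\mu_k=1$ and $\mu_k=0$ into the rule \eqref{offs1} and read off the two conclusions. The paper's Appendix~B does exactly this in two short paragraphs, without invoking Theorem~\ref{tn1}, complementary slackness, or the homogeneity of $f_k^C$ explicitly; it simply writes down the specialised $\arg\min$ expressions and asserts the results (in particular, for $\mu_k=1$ it just states ``As a result, $\alpha_k^\ast=1$, $\forall k$'' with no further argument, and for $\mu_k=0$ it identifies $i^\ast=\arg\min_i\bigl(\frac{l_i}{B\log_2(1+\gamma_i)}-\frac{C_il_i}{f_i^F}\bigr)$ and then equates this with the smallest $f_k^F$). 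The tie-breaking concern you flag is real and is not addressed in the paper either---your proposal is in fact more careful on this point than the published proof.
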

\begin{proof}[Proof]
The proof is given in Appendix $B$.
\end{proof}

\section{Performance Evaluation}
In this section, our simulation results characterize the performance of bandwidth allocation and task allocation for the proposed delay-aware cell-free massive MIMO-aided multi-tier computing systems compared to several baseline schemes.

\subsection{System Parameters}
\textcolor{black}{To verify the accuracy of the analysis results, we run simulations over $10,000$ channel realizations to obtain the averaging results. Additionally, the channel coefficient samples are generated at a period of $0.005$ms during the simulation.} Unless otherwise noted, most simulations follow the following scenario. The cell-free massive MIMO systems have $25$ FANs with sufficient computing resources. 
The computational capacity of each CN is selected from the set $\{0.2,0.3,\cdots,0.8\}$ GHz randomly and will be fixed. 
The local computing energy per computation cycle $z_{i}$ follows a uniform distribution in the range of $(0,20\times 10^{-11})$ J/cycle. For the computational task, we consider the service similar to that in \cite{2016-Chen-Offloading}, where any task $k_s$ has a size of $l_k=500$ KB, $\forall k\in\mathcal{S}$, and the required computation cycles per bit follows a uniform distribution in the range of $[500,1500]$ cycles/bit.

\begin{figure}[!t]
\begin{center}
\includegraphics [width=3.6in]{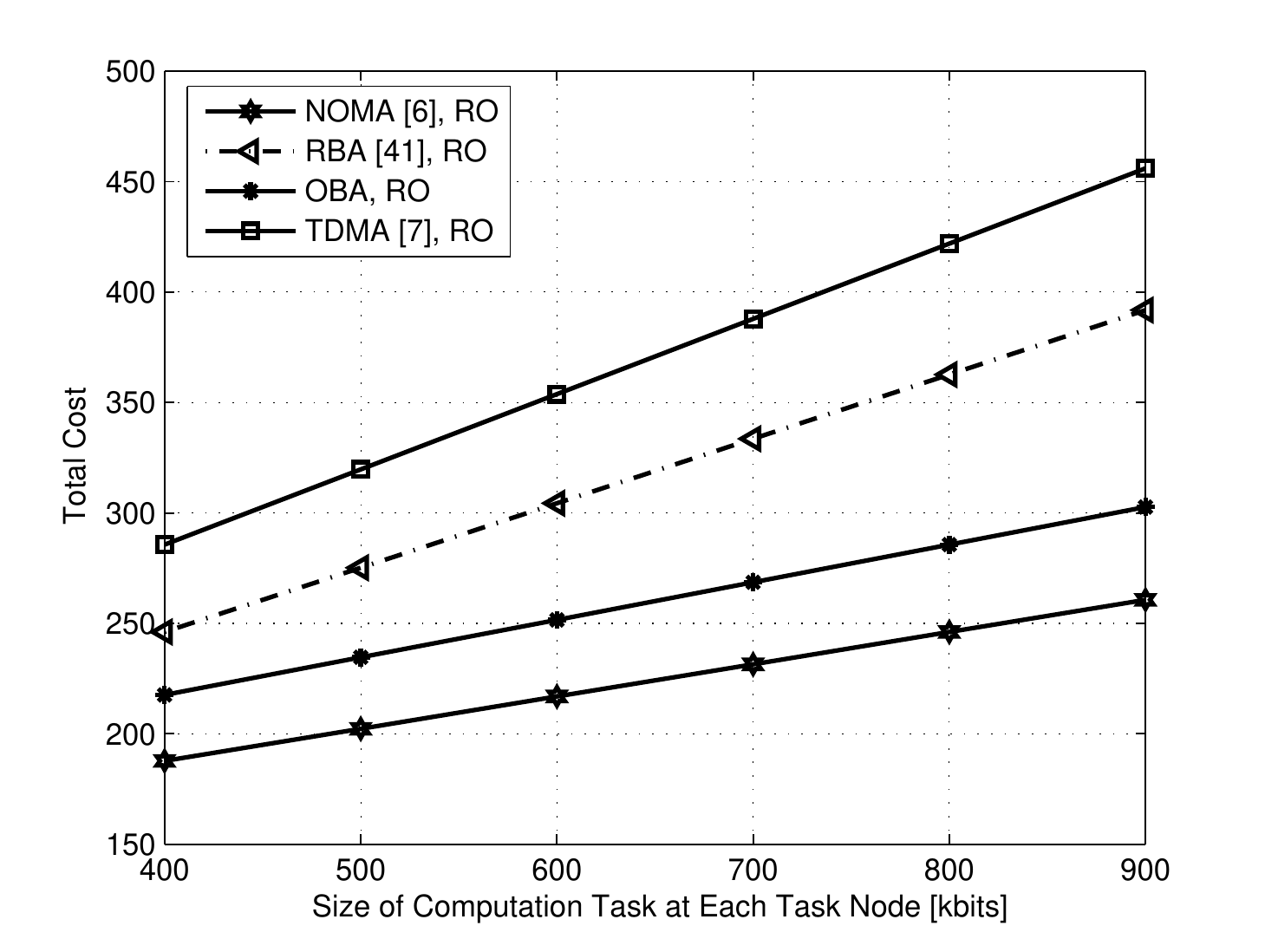}
\caption{\textcolor{black}{Total task computation cost versus the task size, for $K=8$.}}\label{fig2}
\end{center}
\end{figure}
\begin{figure}[!t]
\begin{center}
\includegraphics [width=3.6in]{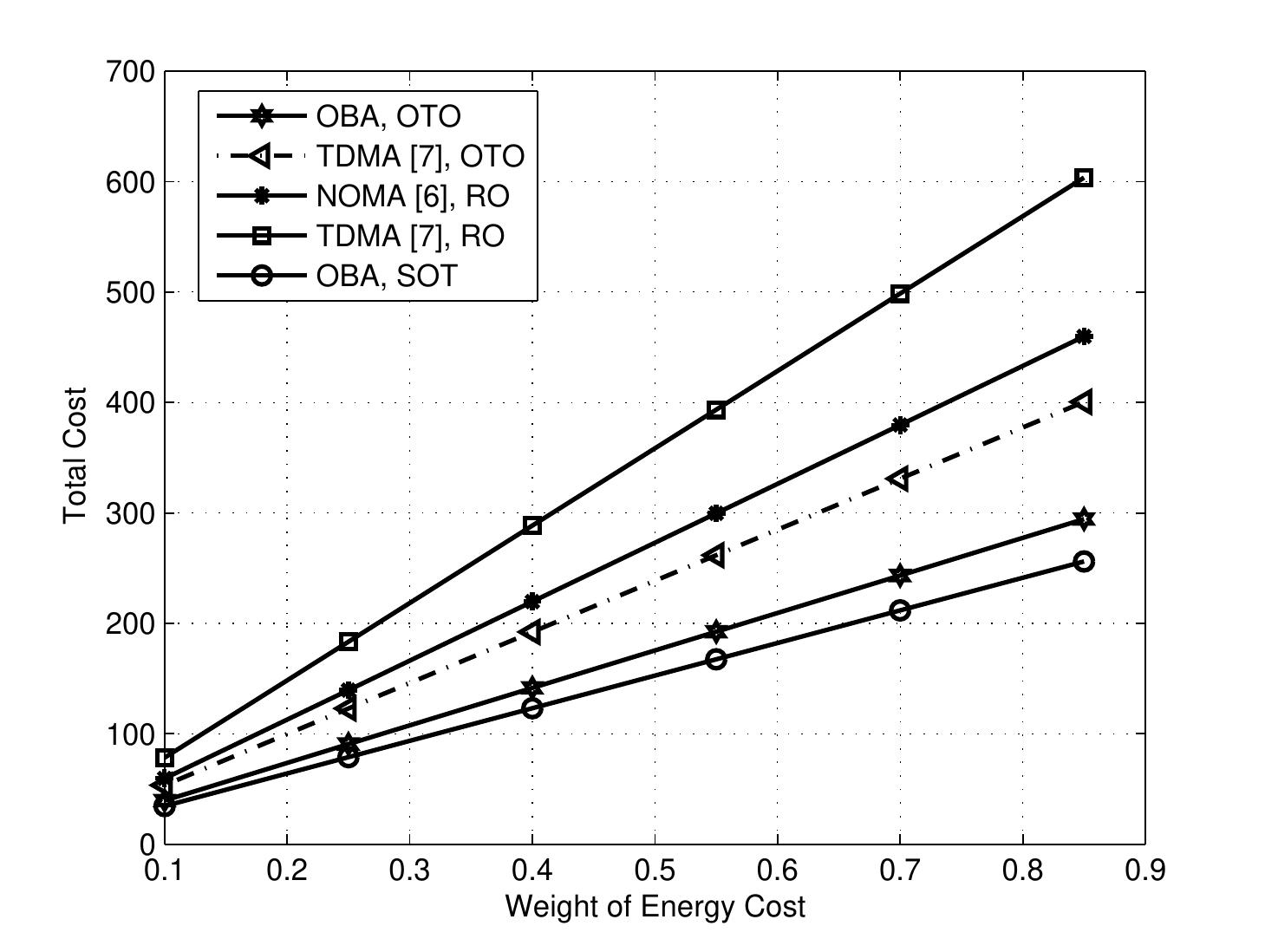}
\caption{\textcolor{black}{Total cost versus the weight of energy cost $\mu_k$, $\forall k$, for $K=8$.}}\label{fig3}
\end{center}
\end{figure}
\begin{figure}[!t]
\begin{center}
\includegraphics [width=3.6in]{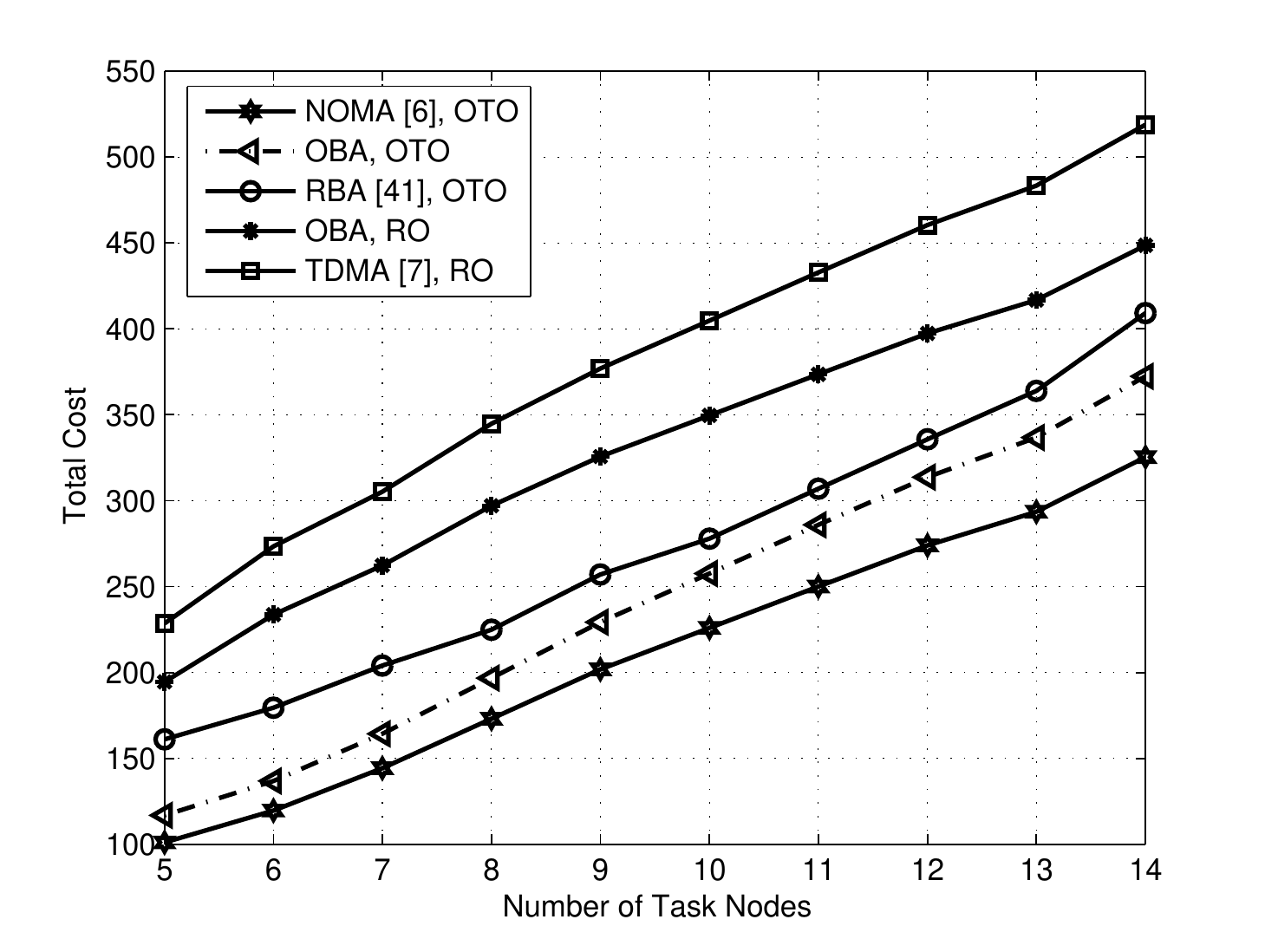}
\caption{\textcolor{black}{Total cost versus the number of TNs $K$.}}\label{fig4}
\end{center}
\end{figure}

\begin{figure}[!t]
\begin{center}
\includegraphics [width=3.6in]{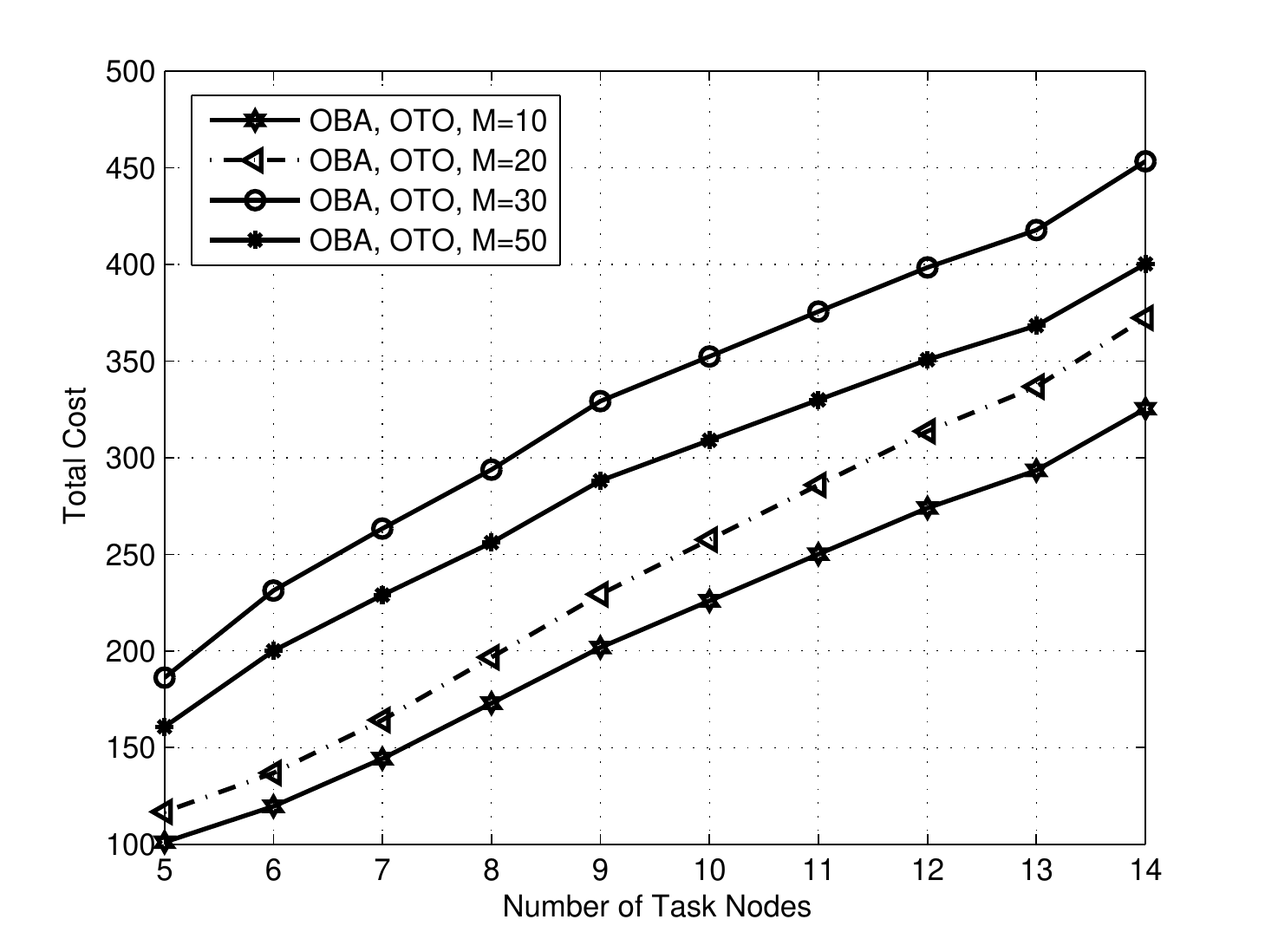}
\caption{\textcolor{black}{Total cost versus different number of TNs $K$ for different number of CANs.}}\label{fig5}
\end{center}
\end{figure}

\begin{figure}[!t]
\begin{center}
\includegraphics [width=3.6in]{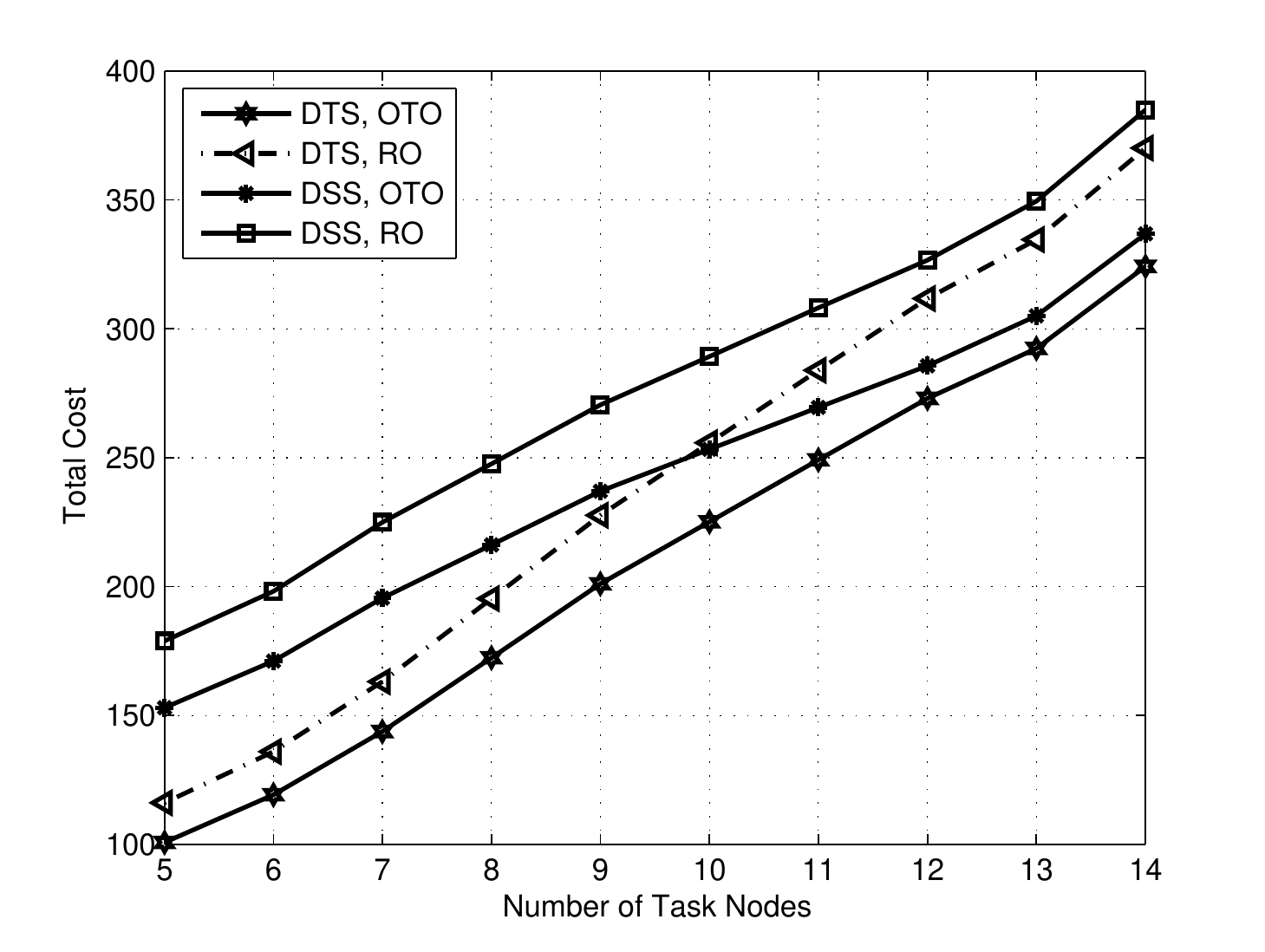}
\caption{\textcolor{black}{Total cost versus different number of TNs for different delay requirement services.}}\label{fig6}
\end{center}
\end{figure}

\subsection{Performance bandwidth allocation}
\textcolor{black}{In Fig.~\ref{fig2}, we show the total cost under different sizes of the computational task at each TN,} \textcolor{black}{where we compare our optimal bandwidth allocation (OBA) strategy to the conventional techniques of time division multiple access (TDMA) \cite{2018-Yang-MEETS}, non-orthgonal multiple access (NOMA) \cite{2020-Wang-NOMAFOG} and random bandwidth allocation (RBA) \cite{2021-Song-NOMAedge}. Explicitly, in the NOMA strategy, each TN offloads the task with the same time and frequency.} It can be observed from this figure that the NOMA performs best in total energy and delay cost compared to TDMA and FDMA. This is because the task offloading latency can be largely reduced by NOMA strategy. However, as the received signals from different TNs are superimposed at the FANs, this strategy has relatively highest computational complexity than TDMA and FDMA.  \textcolor{black}{As expected, we can observe from this figure that our OBA strategy always performs best in total cost than those of the TDMA and RBA strategies for different sizes of computational task.} Furthermore, we can observe that the NOMA offers better performance in total cost than that of the OBA strategy, since OBA strategy is derived from FDMA. Upon increasing the size of the computational task, there is an another interesting remark, the total cost is increased. If the number of size of the computational task is higher, then the task computation energy consumption will be higher. Additionally, it can be observed from this figure that the OBA strategy offers better performance than TDMA in terms of total energy and delay consumption, implying that it is more beneficial for optimal bandwidth allocation in total energy and delay consumption. 

For the comparison of different bandwidth allocation strategies, we also plot the total cost when applying simulated optimal task offloading (SOT), optimized task offloading (OTO) and random task offloading (RO) strategies. In Fig.~\ref{fig3}, we show the total task offloading cost versus the weight of energy consumption $\mu_k$, $\forall k$, for different task offloading strategies. \textcolor{black}{We can observe from this figure that the total cost increases when increasing $\mu_k$, implying that the total energy consumption dominates the total cost. It can be observed that the total task offloading cost of OBA strategy is lower than that of the TDMA under the condition of the same task offloading policy. These observations further verify that the system cost can be reduced by different weights of energy consumption with OBA policy.} In summary, OBA strategy makes the task offloading more power efficient without degrading corresponding task offloading rate. \textcolor{black}{As expected, it can be observed from this figure that SOT always performs best in total cost than those of the OTO and RO strategies for different weights of energy consumption.}
Furthermore, we can observe that the OTO strategy always reduces the total cost regardless of the bandwidth allocation strategies, which confirms the analytical results. In case of energy efficient task offloading, our simulation results further indicate that taking bandwidth allocation into account is necessary, which can realize energy-efficient task offloading for different delay requirements.

\subsection{Performance of task allocation}
\textcolor{black}{As shown analytically in Section \uppercase\expandafter{\romannumeral4}, the cell-free massive MIMO gain of the bandwidth allocation strategy over the task offloading strategy decreases with the number of CAN as well as the received SINR, while it is not affected by these parameters when the number of CAN goes to infinity. In this section, we demonstrate the impact of $M$, $K$, and different requirements of the computational tasks on the total cost by means of simulations.}

\textcolor{black}{Fig.~\ref{fig4} illustrates the total cost versus different number of TNs for different bandwidth allocation and task offloading policies. We compare our strategy to traditional systems operating with RO, such that RO serves as a benchmark. As expected, we can observed from this figure that our OTO strategy always performs better in total energy and delay cost than that of the RO policy with different number of TNs. Additionally, it can observed that the OTO scheme relying on the optimal bandwidth allocation strategy can always achieve better performance than that of the traditional offloading scheme. Furthermore, as expected, optimal bandwidth allocation scheme offers much better performance than TDMA and RBA utilizing the same task allocation scheme. It should be noted that even if the OBA scheme is adopted, OTO performs better by the optimal task allocation scheme with the aid of cell-free massive MIMO. This is due to the fact that the optimal task allocation scheme increases the task offloading throughput as well as decreases its energy cost of the cell-free massive MIMO systems. On the other hand, it can be observed that the total cost is increased by increasing the number of TNs. It should be noted that if the number of TNs is larger, then the energy cost and total computational delay of the task offloading cell-free massive MIMO systems will be larger. }

Fig.~\ref{fig5} illustrates the performance of the total cost in terms of both different number of TNs and different number of CANs. As expected, it can be observed that the OBA scheme using OTO strategy can reduce total cost. Meanwhile, the total cost is increased by increasing the number of TNs. This result implies that the total energy consumption and computational latency are increased. Furthermore, we can observe that the larger the number of CAN is, the smaller the total cost is. This is because the larger the number of CAN, the higher the SINR at the CPU, thus reducing both the total task offloading delay and energy cost. As the optimal task allocation strategy can realize energy-efficient task transmission, the larger the number of CAN utlizing the optimal task allocation strategy can always perform better in total cost, which confirms our analysis results. It should be noted that these results indicate that the total cost is decreased by placing more CANs.

Next, Fig.~\ref{fig6} plots the total cost versus different number of TNs $K$ for different delay requirement services. It can be observed that the total cost is significantly increased upon increasing $K$ and the delay tolerant services (DTS) with OTO always performs best in total cost, which is due to the fact that the tasks are allocated to the CPU with cloud computation for DTS with OTO. Additionally, the OTO scheme associated with optimal task allocation scheme always offers better performance than the RO scheme. This is because the optimal task allocation scheme can achieve the minimal delay and energy consumption for task transmission. We can also observe that for different number of TNs, delay sensitive services (DSS) always have much larger computational latency and energy consumption than DTS for the same task offloading strategy. Since the DSS has much more stringent delay requirement, result in larger energy consumption and total cost, which confirms our analysis results. We have another interesting remark from this figure, there is a crossing point by increasing the number of TNs between the performances of the DSS and the DTS for different task offloading strategies. According to these simulation results, when $K$ is small, the DTS with RO strategy has smaller total cost than DSS with OTO scheme. On the other hand, when $K$ is large, the DTS with RO strategy has larger total cost than DSS with OTO scheme. We hold the opinion that the optimal task offloading strategy has a great influence on the total cost when $K$ is large. In this case, the OTO will always performs better regardless of the different delay requirements of the tasks.

\section{Conclusions}
In this paper, we proposed a cell-free massive MIMO-assisted multi-tier computing systems and investigated the bandwidth allocation and task offloading, where the intensive tasks from TNs can be offloaded to nearby FAN, and to the CPU constituted by the nearby CAN via the cell-free massive MIMO. \textcolor{black}{We formulated a total cost minimization problem in terms of energy consumption and computational latency}, while considering realistic heterogenous delay requirements of the tasks. Since we consider binary task offloading, the resulting non-convex bandwidth and task allocations problem can be solved \textcolor{black}{by decoupling the original problem}. As the bandwidth allocation problem is a convex optimization problem, we first obtained the bandwidth allocation solution under a given task allocation strategy, followed \textcolor{black}{by conceiving the traditional convex optimization method to determine the bandwidth allocation result}. According to the bandwidth allocation solution, the Lagrange partial relaxation method has been used for formulating the Lagrange dual problem by relaxing the binary constraint to determine the task offloading result. \textcolor{black}{The simulation results demonstrate that the proposed
strategy always performs best with the benchmark strategies. Meanwhile, based on the received SINR obtained by $M\rightarrow\infty$, the cost optimal task offloading strategy can be chosen for heterogeneous delay requirements of the computational tasks in our proposed cell-free massive MIMO-assisted multi-tier computing systems.} \textcolor{black}{Future work is in progress to consider more general case that arbitrary FANs are deployed for each TN to assist task offloading in the proposed framework.}

\appendices
\section{proof of Theorem~\ref{tn1}}\label{a1}
According to Tchebyshev¡¯s theorem \cite{1970-Cramer-RV}, we have
\begin{equation}\small
\lim_{N\rightarrow \infty}\frac{1}{N}(X_1+X_2+\ldots+X_N)=\frac{1}{N}(\textsf{E}(X_1)+\textsf{E}(X_2)+\ldots+\textsf{E}(X_N)),
\end{equation}
where $X_1$, $X_2$, $\ldots$, $X_N$ are N independent random variables, $\textsf{E}(X_i)$ denotes the expectation of $X_i$, $\forall i$.
Then, regarding the second term on the right hand side in \eqref{CPUk}, we have
\begin{equation}\label{interference}
\lim_{M\rightarrow \infty}\frac{1}{M}\tau_D\sqrt{\rho q_k}\sum_{m=1}^M(\mathbf{\hat{g}}_{m,k}^\mathrm{H}\mathbf{\hat{g}}_{m,k})^{-1}\mathbf{\hat{g}}_{m,k}^\mathrm{H}\mathbf{\Omega}_{D,k}s_k=0.
\end{equation}

By adopting the eigenvalue/eigenvector decomposition of $\mathbf{\hat{g}}_{m,k}^\mathrm{H}\mathbf{\hat{g}}_{m,k}$, we obtain
\begin{equation}
\mathbf{\hat{g}}_{m,k}^\mathrm{H}\mathbf{\hat{g}}_{m,k}=\mathbf{Q}\mathbf{\Lambda }\mathbf{Q}^\mathrm{H},
\end{equation}
where $\mathbf{\Lambda}=\rm{diag}\{\lambda_1,\cdots,\lambda_K\}$ and $\mathbf{Q}$ respectively denotes the nonnegative diagonal eigenvalue matrix and the unitary eigenvector matrix, respectively. Thus, regarding the third term on the right hand side in \eqref{CPUk}, we have
\begin{equation}
\begin{aligned}
&\lim_{M\rightarrow \infty}\frac{1}{M}\sum_{m=1}^M(\mathbf{\hat{g}}_{m,k}^\mathrm{H}\mathbf{\hat{g}}_{m,k})^{-1}\mathbf{\hat{g}}_{m,k}^\mathrm{H}n_{m,k}\\
=&\lim_{M\rightarrow \infty}\frac{1}{M}\sum_{m=1}^M(\mathbf{Q}\mathbf{\Lambda }\mathbf{Q}^\mathrm{H})^{-1}\mathbf{\hat{g}}_{m,k}^\mathrm{H}n_{m,k}\\
=&\lim_{M\rightarrow \infty}\frac{1}{M}\sum_{m=1}^M(\mathbf{\Lambda })^{-1}\mathbf{\hat{g}}_{m,k}^\mathrm{H}n_{m,k}
\end{aligned}
\end{equation}

Then, we have
\begin{equation}\label{noise}
\begin{aligned}
&\lim_{M\rightarrow \infty}\frac{1}{M}|\sum_{m=1}^M(\mathbf{\hat{g}}_{m,k}^\mathrm{H}\mathbf{\hat{g}}_{m,k})^{-1}\mathbf{\hat{g}}_{m,k}^\mathrm{H}n_{m,k}|^2\\
=&\lim_{M\rightarrow \infty}\frac{\sigma^2}{M}\sum_{m=1}^M\rm{tr}\left((\mathbf{\hat{g}}_{m,k}^\mathrm{H}\mathbf{\hat{g}}_{m,k})^{-1}\mathbf{\hat{g}}_{m,k}^\mathrm{H}\mathbf{\hat{g}}_{m,k}(\mathbf{\hat{g}}_{m,k}^\mathrm{H}\mathbf{\hat{g}}_{m,k})^{-1}\right)_{k,k}\\
=&\frac{\sigma^2}{M}\sum_{m=1}^M\sum_{k=1}^{K}\lambda^{-1}_{k,m}.
\end{aligned}
\end{equation}
Based on \eqref{interference} and \eqref{noise}, we have
\begin{equation}
\lim_{M\rightarrow \infty}\gamma_k=\frac{q_k\rho(1-\tau_D^2)}{\frac{\sigma^2}{M}\sum_{m=1}^M\sum_{k=1}^{K}\lambda^{-1}_{k,m}}.
\end{equation}

\section{proof of Theorem~\ref{tn2}}\label{a2}
While all tasks are from delay tolerant services, the energy consumption will dominate the total cost. Then, $\mu_{k}=1$, $\forall k$. Substitute $\mu_{k}=1$ into \eqref{offs1}, we have
\begin{equation}
\alpha^{\ast}_{k}=\left\{\begin{matrix}
1, & \text{if}~k^\ast=\arg\min\limits_{k'\in\mathcal{K}}(\frac{ql}{B\log_2\left(1+\gamma_{k'}\right)}-C_0lP_{\mathrm{CN}}+\psi),\\
0, & \text{otherwise}.
\end{matrix}\right.
\end{equation}
As a result, $\alpha^{\ast}_{k}=1$, $\forall k$. Therefore, all the tasks will be computed remotely at CAN.

On the other hand, while all the tasks are from delay sensitive services, the total delay cost will dominate the total cost. Then, $\mu_{k}=0$, $\forall k$. Substitute $\mu_{k}=1$ into \eqref{offs1}, we have
\begin{equation}
\alpha^{\ast}_{k}=\left\{\begin{matrix}
1, & \text{if}~k^\ast=\arg\min\limits_{k'\in\mathcal{K}}(\frac{l}{B\log_2\left(1+\gamma_{k'}\right)}+\frac{C_{k'}l}{f_{k'}^C}-\frac{C_{k'}l}{f_{k'}^F}+\psi),\\
0, & \text{otherwise}.
\end{matrix}\right.
\end{equation}
Then, only the tasks from node $i^\ast$ will be computed remotely at CAN, where $i^\ast=\arg\min\limits_{i\in\mathcal{K}}(\frac{l_{i}}{B\log_2\left(1+\gamma_{i}\right)}-\frac{C_{i}l_{i}}{f_{i}^F})$, i.e., $i^\ast$ is from the smallest $f_{k}^F$, $\forall k$. This indicates that the FAN with the lest computational capacity will offload the tasks to the CAN, while the other tasks will be computed locally at the FAN.


\bibliography{mybib}
\bibliographystyle{ieeetr}

\end{document}